\newtheorem{theorem}{Theorem}[section]
\newtheorem{lemma}[theorem]{Lemma}
\newtheorem{claim}[theorem]{Claim}
\newcommand{\MAB}{\textsf{MAB}}
\renewcommand{\vec}[1]{\bm{#1}}
\DeclareMathOperator*{\argmax}{arg\,max}
\begin{document}
\title{Stochastic Bandits for Multi-platform Budget Optimization in Online Advertising}

\author[1]{Vashist Avadhanula\thanks{vas1089@fb.com}}
\author[1]{Riccardo Colini-Baldeschi\thanks{rickuz@fb.com}}
\author[2]{Stefano Leonardi\thanks{Supported by ERC Advanced Grant 788893 AMDROMA ``Algorithmic and Mechanism Design Research in Online Markets'' and MIUR PRIN project ALGADIMAR ``Algorithms, Games, and Digital Markets''. leonardi@diag.uniroma1.it}}
\author[3]{Karthik Abinav Sankararaman\thanks{karthikabinavs@fb.com}}
\author[1]{Okke Schrijvers\thanks{okke@fb.com}}
\affil[1]{Facebook Core Data Science}
\affil[2]{Sapienza University of Rome \& Facebook Core Data Science}
\affil[3]{Facebook AI Applied Research}

\maketitle
\begin{abstract}
We study the problem of an online advertising system that wants to optimally spend an advertiser's given budget for a campaign across multiple platforms, without knowing the value for showing an ad to the users on those platforms.  We model this challenging practical application as a Stochastic Bandits with Knapsacks problem over $T$ rounds of bidding with the set of arms given by the set of distinct bidding $m$-tuples, where $m$ is the number of platforms. We modify the algorithm proposed in Badanidiyuru \emph{et al.,} \cite{Badanidiyuru:2018} to extend it to the case of multiple platforms to obtain an algorithm for both the discrete and continuous bid-spaces. Namely, for discrete bid spaces we give an algorithm with regret $O\left(OPT \sqrt {\frac{mn}{B} }+ \sqrt{mn OPT}\right)$, where $OPT$ is the performance of the optimal algorithm that knows the distributions. For continuous bid spaces the regret of our algorithm is  $\tilde{O}\left(m^{1/3} \cdot \min\left\{ B^{2/3}, (m T)^{2/3} \right\} \right)$. When restricted to this special-case, this bound improves over Sankararaman and Slivkins \cite{Karthik-aistats18} in the regime $OPT \ll T$, as is the case in the particular application at hand. Second, we show an $ \Omega\left (\sqrt {m OPT} \right)$ lower bound for the discrete case and an  $\Omega\left( m^{1/3} B^{2/3}\right)$ lower bound for the continuous setting, almost matching the upper bounds. Finally, we use a real-world data set from a large internet online advertising company with multiple ad platforms and show that our algorithms outperform common benchmarks and satisfy the required properties warranted in the real-world application.
\end{abstract}
\section{Introduction}
\label{s:intro}

As online advertising has proliferated, ad campaigns have moved from ad-hoc bidding for individual users, to campaigns that try to reach massive audiences while respecting marketing budgets. In recent years this has led online advertising marketplaces to offer budget management solutions that bid on advertiser's behalf to optimally spend a given budget. But, as the online advertising ecosystem grows in size and complexity, those budget management solutions have to take an increasing number of factors into account when optimizing an advertiser's spend.

One such problem that is becoming increasingly ubiquitous in online advertising is designing a bidding strategy to spend an advertiser's budget optimally \emph{across a set of different products}, where each product may have a different user base, competition, and advertising dynamics. Large tech companies have platforms that allow the advertisers to set-up campaigns that can be delivered to users across different internal products (in this paper we will use the unified terminology of \emph{platforms}). The key challenge this leads to is that a campaign with a single budget has to acquire impressions from a variety of platforms that provide different value possibly due to a different user-base and different prices owing to different competition. Further challenges arise from the reality that advertisers often do not have a good understanding of their valuation of showing an ad to users spread across different platforms. Consider, for example, the case of an eCommerce business that advertises a specific good on their website. While they may have a good understanding of how much a conversion---i.e. the user purchases the good---is worth, they may have to pay for a click without knowing how likely the user is going to convert given a click on each of these platforms. Therefore, there is a need for developing budget management solutions that can optimize campaign delivery across multiple platforms in face of this uncertainty and budget constraints. 

	In this paper, we  provide a rigorous mathematical model to study the aforementioned application and provide algorithms that are optimal. We also expect these algorithms to be extremely fast and thus, can be deployed in real-world bidding engines where the system has only a few milliseconds to set bids. First, we describe some practical constraints that the bidding engines face before describing the various attempts at modeling this problem. In modern bidding engines, a bid is composed of multiple components. Most components of this bid is set for a \emph{batch} of requests, before adding a platform specific multiplier that depends on the request. Thus, from the perspective of the central bidding engine, the goal is to decide bids for each of the platforms and this bid will be applied to the next batch of requests in the auction stage. After setting this main bid, the engine only sees a feedback of the total value received from each platform and the total budget that is consumed.
	
	Motivated by the aforementioned example, in this work, we consider the \emph{multi-platform placement optimization with unknown valuations.} More specifically, we study the problem of an advertiser that wants to optimally spend a given budget $B$ over a time horizon $T$ across $m$ platforms. Those platforms select the winners via second-price auctions. The goal is to design a policy that submits, at each time $t$, $m$ bids $b_t(1), ..., b_t(m)$, one for each of the $m$-platforms. The bids are chosen such that the advertiser maximizes the cumulative reward over all the platforms while ensuring that the aggregate payments across all $m$ platforms and the time horizon $T$ do not exceed the allocated budget $B.$ In other words, the objective of the advertiser is to learn the valuation of bidding on each of those platforms while simultaneously maximizing their cumulative rewards. 
	
	The academic community has worked actively to provide methods for doing this effectively when a campaign delivers only in a single platform. The studied methods focused on solving the problem either by limiting the set of auctions that bidders are eligible for, e.g. \cite{mehta2007adwords, abrams2008comprehensive, azar2009gsp, goel2010gsp, karande2013pacing}, or by adjusting the bids that enter the auction, e.g. \cite{rusmevichientong2006, feldman2007budget, hosanagar2008, balseiro2017budget, conitzer2018pacing, ConitzerKPSSSW2019}. The latter are of particular interest, because they can either be implemented by the online advertising platform (as is common), but also by an advertiser themselves. However, the work on budget-management for online advertising has so far not focused on two important concerns: firstly, the marketing budget should be used over different platforms where users on different platforms are valued differently and platforms have different levels of competition, and secondly, advertisers may not have a clear understanding of their valuation for showing an ad to different users across these platforms.
	
	The problem of learning to bid without knowing your value naturally leads to a \emph{multi-armed bandit} formulation of the problem, and for a single platform has been studied without a budget constraint by Feng et al. \cite{feng2018learning}, and with a budget constraint by \cite{amin2012budget, tran2012knapsack, flajolet2017real}, though none of these approaches consider bidding on multiple platforms. In this work, we formulate the problem as a stochastic bandits with knapsacks problem (BwK), introduced in the seminal work by Badanidiyuru et al. \cite{BKS13} and generalize the approach to the multi-platform optimization problem (\cite{Karthik-aistats18}). One of the main contributions of the present paper is in highlighting the salient features of BwK that need to be adapted to design a computationally tractable algorithm that can compute the optimal bid vector from a combinatorial feasible bid set and ensuring that the performance (as measured by regret) of our algorithm scales optimally in both the number of platforms and the optimal value in hindsight. Sankararaman and Slivkins \cite{Karthik-aistats18} consider a more general problem where the combinatorial feasible set can be any general matroid. Thus, their algorithm is much more complicated and their bounds in the regime when $OPT \ll T$ is weaker than the one in the present paper. In this work, the feasible bid  set is the special case of partition matroid which our algorithm specifically exploits and thus, does not require the rounding procedure required in \cite{Karthik-aistats18}.

	\paragraph{Modeling Choices.}
	Before we present our modeling of this problem with an exponential set of bidding $m$-tuples, we like to discuss some alternative formulations and their drawbacks. A first natural attempt at modeling this problem is to set it up as a \emph{multi-armed bandit} problem where we have $n \times m$ arms with a group of $n$ arms denoting the set of discrete bids that can be chosen for each of the $m$ platforms. The game proceeds in $T \times m$ rounds, where $T$ corresponds to the total number of steps for which the bidding engine has to decide the main bid. We split the $ T \times m$ rounds into $T$ phases where within each phase the algorithm plays for $m$ rounds and sets the bid for each of the $m$ platforms sequentially. At the end of this phase, the algorithm receives a feedback, \emph{i.e.,} the total value and the budget consumed on each platform during this phase. The goal of the algorithm is to maximize the total value while respecting the total budget constraint. We argue that this model has two major drawbacks and thus, is not suitable for this application. First, at any given time-step $(t, i) \in ([T], [m])$, the algorithm cannot choose all the arms, but only a subset of the arms (\emph{i.e.,} those that correspond to platform $i$). In a typical multi-armed bandit problem, we assume that all arms are available at all time-steps. We could model it as the more general \emph{sleeping bandits} \cite{kleinberg2010regret} problem but note that even without budget constraints this problem requires a change in benchmark (\emph{i.e.,} the best ordering of arms) which does not translate to a meaningful notion in our application. Second, in $\MAB$ problems, we assume that the feedback received by the algorithm is immediate. However, in this modeling approach we would only receive the feedback at the end of the phase. Algorithms for \emph{delayed feedback} models compare against \emph{static} policies, while with budget constraints it is known that (\emph{e.g.,} \cite{BKS13}) optimal dynamic policies can obtain at least twice more reward than  the optimal static policy. 
	
	The second natural attempt to model this problem is to consider the \emph{contextual} bandit framework. We have $n$ arms (corresponding to each of the discrete bids) and $T$ time-steps. At each time-step, the algorithm first sees a context $x_t$ (in this case it corresponds to the platform on which the bids need to be chosen on) and then chose an arm that is a function of history and context. After every $m$ rounds, the algorithm receives the feedback for the last $m$ rounds (note even the bidding engine sets the bid for a platform, the feedback is collected over the batch and given back to the engine at the end of the phase). This modeling also has a two major drawbacks. First, as before, we need to handle delayed rewards (now in the much harder contextual bandit setup) for which to obtain algorithms with provable guarantees we need to make further assumptions (that may not hold in practice) such as linear rewards, static optimal policy. Second, contextual bandit problems with budget constraints are significantly harder (\emph{e.g.,} \cite{agrawal2016efficient}) where the regret bounds require the \emph{large} budget assumption. In the context of online advertising for any campaign the total budget is very small compared to the number of auctions it participates in and thus, such large budget assumption is not reasonable.
	
	Owing to the above difficulties, we model this problem as a variant of the \emph{Bandits with Knapsacks} problem with combinatorial actions. We assume that we have $n \times m$ arms, where a group of $n$ arms correspond to the possible bids for each of the $m$ platforms. At each time-step, the algorithm chooses one bid per platform (a total of $m$ arms) and sets this to be the bid for the next phase of auctions. At the end of this phase, the algorithm receives the feedback (the value and the budget consumed) and the algorithm can adapt its bid for the next phase. If for any platform, the bidding engine chooses not to bid for this phase, the algorithm simply chooses the $0$-bid which ensures that it will not win any auction. Moreover, in the generalization later we show that the bids need not be discrete and can be a continuous value in the range $[0, r]$ for some known value $r \geq 1$. This modeling removes the difficulties in the above natural approaches: the benchmark for an optimal dynamic policy is indeed a best possible algorithm the bidding engine can employ, the feedback for each of the arms is obtained immediately at the end of the phase and we do not need to make assumptions on the ratio of budget to the total number of time-steps. \emph{A priori}, a new challenge this modeling approach introduces is that the algorithm needs to choose from an exponentially available set of choices at each time-step. However, as \cite{Karthik-aistats18} show, both the algorithm and the regret bounds can overcome this and only depend on a polynomial function of the total number of platforms.

\subsection{Overview of  Contributions}\label{sec:contributions}

We present algorithms for budget management in online advertising across multiple bidding platforms. The paper presents five main contributions:

\begin{itemize}
	\item \textbf{Tractable mathematical model and algorithms.} The first contribution of this paper is to identify the correct mathematical model to study this problem. As mentioned earlier a number of natural choices fail since they lead to difficulties in obtaining an algorithm with provable guarantees. We study this problem in the framework of Stochastic bandits with knapsacks (BwK) introduced by Badanidiyuru et al. \cite{BKS13}. The goal is to design online bidding strategies that use the information available from the past rounds of ad auction to achieve suboptimal regret with respect to the optimal stochastic strategy that knows in advance the distributions of price and valuation on each platform. Unlike \cite{Karthik-aistats18} we do not require the algorithm to invoke the rounding sub-routine at each time-step. The rounding sub-routine runs in $\mathcal{O}((mn)^2)$ time-complexity per time-step and cannot be amortized across rounds. Thus, this improves the running time of our algorithm when the bid-space and the number of platforms become very large. On the other hand, similar to \cite{Karthik-aistats18}, our algorithm requires us to solve a linear program at each time-step. In practice, this can be fast, since we can cold-start the solution for each time-step with the solution of the previous time-step and thus, the computations can be amortized across the $T$ time-steps to an average of a constant number of steps per-round.
	
	\item {\bf Algorithm and regret analysis}. In Section~\ref{sec:algorithm}, we present a bandit algorithm that achieves a regret bound  $O\left(OPT \sqrt {\frac{mn}{B} }+ \sqrt{mn OPT}\right)$ for discrete bid spaces, (where $m$ is the number of platforms, $n$ the size of the discrete bid space, $B$ the budget and $OPT$ is the performance of the optimal algorithm that knows the distributions). This improves over the regret bound presented in \cite{Karthik-aistats18} from $\mathcal{O}(\sqrt{T})$ to $\mathcal{O}(\sqrt{OPT})$ while keeping the dependence on the other parameters same. Our approach follows very closely the Primal Dual algorithm with knapsack constraints of \cite{BKS13}.  However, we also show how to reduce the problem of finding the optimal bid vector to the problem of maximizing the ratio between a linear function of the UCB of the valuations and a linear function of the LCB of the costs, and this  problem can be solved in polynomial time for a set of totally unimodular linear constraints \cite{AVADHANULA2016612}. In Section~\ref{sec:continuous}, we use the discretization idea introduced in \cite{Badanidiyuru:2018}, to discretize the continuous bid space in $[p_0,1]$, with $p_0$  being the minimum reserve price across all platforms, by using an $\epsilon$ grid.  The regret bound $\tilde{O}\left(\frac{(mv_0)^{1/3}}{p_0^{2/3}} \cdot \min\left\{ \left( \frac{Bv_0}{p_0} \right)^{2/3}, (m T)^{2/3} \right\} \right)$ is obtained by adding the discretization error $\frac{B\epsilon v_0}{p_0^2}$, with $v_0$ being the maximum expected valuation across all platforms, to the regret  for the discrete setting with $n = 1/\epsilon$ bid values.  For the special case of single platform we improve over the $\mathcal{O}(T^{2/3})$ bounds in \cite{tran-tranh14} for the more challenging and realistic setting of $B= o(T)$. Additionally, the algorithms in \cite{tran-tranh14} require the information on whether the algorithm won the auction or not at each time-step; while our algorithms do not need this information. Note, that ours is a more realistic assumption, in general, since advertising systems have \emph{checks and balances} in place, after an auction phase and thus, impression/conversion information does not directly translate to whether an ad won an auction or not. 
	\item {\bf Lower bounds}. In Section~\ref{sec:lowerbound} we complement our algorithmic results with a $ \Omega\left (\sqrt {m OPT} \right)$  lower bound for the discrete case and an  $\Omega\left( m^{1/3} B^{2/3} \right)$  lower bound for the continuous case thus showing that our algorithmic results are close to optimum despite the complexity of our setting. 
	\item {\bf Experiments on real data}. In Section~\ref{sec:experiments} we evaluate our online bidding strategies on  real-world dataset with multiple platforms and several budget limited advertisers  obtained from the logs of a large internet advertising company.   We compare our algorithms against  the total reward of the optimal $LP$ value computed on the mean valuation and prices and on two different baselines. The first baseline is the naive \emph{UCB} algorithm that ignores the budget constraints and maximizes the rewards. The second algorithm is the \emph{LuekerLearn algorithm} from \cite{tran-tranh14} adapted to multiple platforms and stochastic valuations.  Our experimental results show that we outperform both the benchmarks in efficiency (total regret accumulated). We show how the various algorithms compare as a function of budget and number of platforms. Moreover, we also observe that our proposed algorithm uniformly depletes the budget and always runs till the specified time-limit, while both the baselines exhaust budget very early. Uniform budget spend is an \emph{expectation} many advertisers have and thus, this behavior of our algorithm is extremely desirable for practical deployment.
\end{itemize}

\subsection{Additional Related Work}
\label{sec:relatedwork}

The problem of optimal bidding policies in second price auctions from the perspective of bidders/advertisers can be broadly categorized into four themes.

\paragraph{Budget management with known valuation.} Early approaches to budget management with known valuations (and a single platform) have focused on shading an advertiser's value by a multiplier (often called the ``pacing multiplier''). For example, the work of \cite{rusmevichientong2006, feldman2007budget, hosanagar2008} implement bid shading strategies to maximize the advertisers ROI. What makes this approach appealing, is that the static problem (with known competitor bids) has an optimal solution with a \emph{constant} pacing multipier, so the dynamic approach aims to find this constant. More recent work \cite{cray2007,borgs2007dynamics,balseiro2017budget, conitzer2018pacing, ConitzerKPSSSW2019, balseiro2021contextual} have complemented this bid-manipulation literature with equilibrium analysis under the assumption that all advertisers use the same bid-shading approach. 

An alternative to bid modification, is for a platform to restrict the number of auctions that an advertiser participates in. Work by Mehta et al.~\cite{mehta2007adwords} give revenue guarantees for the online matching problem of users showing up over time. Subsequently bidder selection has been applied to multi-objective optimization for Generalized Second Price auctions in e.g. \cite{abrams2008comprehensive,azar2009gsp,goel2010gsp,karande2013pacing}. While the work on bidder selection yields allocations that respect the budget constraint, advertisers are not best-responding to the competition and they can generally improve their allocation by lowering their bids.

\paragraph{Bidding with unknown valuations and no budget constraints.} Feng et al. \cite{feng2018learning} focus on the problem of bidding on a platform without knowing the valuations for users. They model this as a stochastic bandit setting with partial feedback (they only learn the value if the ad won, and the action that the advertiser cares about was taken). They give an online learning algorithm with regret that grows logarithmically in the size of the action space. Their model considers more complex auction environments such as position auctions\footnote{Though not with advertiser-specific discount curves \cite{colini2020ad}\cite{elzayn2021equilibria}.} \cite{V07,EOS07}, though it doesn't take into account multiple platforms or budget constraints. Follow-up work by Feng et al.~\cite{feng2019online} focus on measuring incentive compatibility, but their learning setup also uncovers the best response for an unknown auction format as side effect.

\paragraph{Budget-constrained bidding and unknown valuations.} This line of literature combines the problem settings of the above two themes. More specifically, existing work consider the setting where the valuation as well as the distribution of the competing bids is unknown and focus on optimal bidding strategies that satisfy the specified budget, maximizes the cumulative rewards and learns the valuation of the item. To the best of our knowledge, the problem of learning the optimal bid in second price auctions was first considered by Amin et al. \cite{amin2012budget}. They focus on the discrete bids setting, where the feasible bids belong to a discrete set of bids and formulate the problem as Markov decision process with censored observations. They propose a learning algorithm based on the popular product-limit estimator and demonstrate empirically that their algorithm converges to optimal solution. Tran-Tranh et al. \cite{tran2012knapsack} followed up on the work of \cite{amin2012budget} and provided theoretical guarantees for the algorithm proposed in the later work. More specifically, they assume that the bids are sampled from a discrete distribution in $\{1, \cdots, C\}$ and establish $O(\sqrt{CB})$ regret bounds, where $C$ is the number of feasible bids and $B$ is the total budget. The aforementioned regret bounds depend on some parameters of the competing bids distribution. Flajolet and Jailiet \cite{flajolet2017real} consider a more general variant of the aforementioned problem by relaxing the assumption that the bids come from a discrete set  and proposed an algorithm based on the probabilistic bisection (see \cite{stochastic_root_finder}) and upper confidence bounds (UCB) family of algorithms. They establish $O(\sqrt{T})$ regret bound for the problem under the assumption that budget scales linearly with time, i.e. $B = \Theta(T)$. All three works (\cite{amin2012budget}, \cite{tran2012knapsack} and \cite{flajolet2017real}) in this theme only focus on bidding strategies for a single platform and cannot be easily generalized to the multi-platform setting.  Finally,  Nuara et al. \cite{aless2020online} study the problem of joint bid/daily budget optimization of pay-per-click advertising campaigns over multiple channels. The authors formulate the problem with combinatorial semi-bandits, which requires solving a special case of the Multiple-Choice Knapsack problem every day, and it obtains an $O(\sqrt T)$ regret bound.  Differently from our work that performs on online budget optimization across multiple rounds, this last paper does static budget optimization in each single round. 

\paragraph{Stochastic bandits with knapsack constraints. } Stochastic bandits with knapsacks (BwK), introduced in the seminal work of Badanidiyuru et al. \cite{BKS13} is a general framework that considers the standard multi-armed bandit problem under the presence of additional budget/resource constraints in addition to the time horizon constraint. The BwK problem encapsulates a large number of constrained bandit problems that naturally arise in many application domains including dynamic pricing, auction bidding, routing and scheduling. \cite{BKS13, Badanidiyuru:2018} presents two algorithms based respectively on balanced exploration and primal-dual  also used in this work with performances close to the information theoretical optimum. More specifically, they establish a $O(\log(dT)(\sqrt{X \bar{{\sf OPT}}}  + \bar {\sf OPT}\sqrt{m/B}))$, where $X$ is the total number of arms, $d$ is the number of constraints, $B$ is the binding budget constraint and $\bar{\sf{OPT}} $ is a ``reasonable" upper bound on the cumulative optimal reward. A naive generalization of the BwK framework to our problem would result in a combinatorial number of arms ($n^m$ possible bids). Recently, Sankarraman and Slivkins \cite{Karthik-aistats18} extended the BwK framework to combinatorial action sets and established $O(\sqrt{XT})$ regret bounds, which is only optimal only when the cumulative reward is comparable to the length of the time horizon, which is not the case in many online auctions. In this work, we extend the BwK (\cite{BKS13}) framework to propose an optimal bidding strategy with performances guarantees that are near optimal even when the cumulative reward is much smaller in comparison to length of the time horizon. 

The BwK framework \cite{BKS13} also has  been extended with near-optimal regret bounds to  contextual bandits (see \cite{pmlr-v35-badanidiyuru14,CBwK-colt16,CBwK-nips16}), settings with concave reward functions and convex constraints (see \cite{AD14}). Adversarial bandits with knapsack are also  extended in \cite{Immorlica19} to the combinatorial semi-bandit, contextual, and convex optimization settings.  However, these problem settings are not immediately applicable to our problem.

\section{Ad Platform Optimization with Unknown Valuations}\label{sec:problemformulation}

We model multi-platform real-time bidding for budgeted advertisers as a multi-armed bandit problem with knapsack constraints  \cite{Badanidiyuru:2018}. The advertiser has an available budget $B$ for $T$ rounds of the ad auction.  At each time step $t\in T$, the bidder chooses a bid $\vec{b}_t \in [0, 1]^m$ where $b_t(i)$ is the bid for platform $i  \in [m]$. We also consider the special case of discrete bid in section \ref{sec:algorithm}, each of the bid $b_t(j)$ is chosen from the finite set $\mathcal{B} = \{b_1,\ldots, b_n\}$, where $b_j\in [0,1], j\in [n]$. The set  of arms  $\mathcal X = \{ \langle b(1), \ldots, b(m)\rangle, b(i)\in \mathcal{B}, i\in [m]\}$ is therefore given by the set of distinct bidding $m$-tuples. Denote by  $\bm b= \langle b(1), \ldots, b(m)\rangle$ the generic bid vector and  by $x_{\bm b} \in \mathcal{X}$ the corresponding arm. 

At each phase $t$, each platform $i$ runs many second price auctions among the bids that are received for this phase. When at time $t$, the bidder bids ${\bf b}_t$ across the platforms, bid $b_t(i)$ is entered on platform $i$. Let $p_t(i)\sim P(i)$ be the \emph{critical bid}\footnote{The critical bid is either the highest bid of the other bidders or the reserve price, whichever is higher.} for the bidder on platform $i$ at time $t$ (where $P(i)$ is a stationary distribution). If $b_t(i) \geq p_t(i)$ the bidder wins the auction (we break ties in favor of the advertiser), the price is equal to the critical bid $p_t(i)$ and at that point learns the realized value $v_t(i) \sim V(i)$ (where $V(i)$ is a stationary distribution). For ease of presentation, bids, critical bids and values are normalized so that $b(i)$, $p(i)$, $v(i) \in  [0, 1]$ for all $i\in[m]$, and we'll use the terms critical bid and price interchangeably where appropriate \footnote{Note this is without loss of generality, since we can scale the values and multiply this scale in the regret bound.}.

We denote by 
\[
r_{x(\bm b)}= \sum_{i\in [m]} v(i) \mathbb{I} [b(i)\geq p(i)]\qquad\text{and} 
\]
\[
c_{x(\bm b)} = \sum_{i\in [m]} p(i) \mathbb{I} [b(i)\geq p(i)]
\]
the reward and the  cost of arm $x_{\bm b}$ with critical bid vector $\bm p$ and reward vector $\bm v$.  
We denote  by  $\bm b_t= \langle b_t(1), \ldots, b_t(m)\rangle$ the bid vector of the bidder at time $t$.  
We also use  $\bm p_t = \langle p_t(1), \ldots, p_t(m)\rangle$ for the critical bid vector that is realized at time $t$ and $\bm v_t = \langle v_t(1), \ldots, v_t(m)\rangle$ for the  vector of rewards that is realized at time $t$.  

In the stochastic bandit setting, critical bid $p(i)$ and reward values $v(i)$ are drawn at any time $t$, respectively,  from the stationary  independent
distributions  $P(i), V(i)$ for $ i\in [m]$, unknown to the algorithm. 
We also denote with $\bar r_{x(\bm b)} = E[r_{x(\bm b)}]$ and $\bar c_{x(\bm b)}= E[c_{x(\bm b)}]$ the expected value of 
the reward and the cost of arm $x(\bm b)$, where the expectation is taken over the critical bid $p(i)\sim P(i)$ and the reward value $v(i)\sim V(i)$. 

The bidder must decide on the bidding vector $\bm b_t$ without the knowledge of the critical bid vector $\bm p_t$ and the reward vector $\bm v_t$. 
If $b_t(i)\geq p_t(i)$, the feedback provided to the advertiser upon bidding $b_t(i)$ is price $p_t(i)$ and utility $v_t(i)$.  If $b_t(i)< p_t(i)$, the advertiser only learned that $b_t(i)$ is lower than the critical bid. 
The goal is to design an online bidding strategy that selects  a bid vector $\bm b_t$ at each round $t$ 
so that the time-averaged reward across the multiple platforms is maximized and the total cost paid by the algorithm does not exceed budget $B$.

Given that the advertiser is budget limited,  we use random variable  $\tau$   to indicate the first time the budget of the advertiser is exhausted, i.e., 
\[
\tau = {\rm min} ~ \left( T+1, \min \left\{z \in [T] | \sum_{t=1}^z c_{x(\bm b_t)} >B \right\} \right)
\]
We compare the reward obtained by the algorithm against the reward $OPT$ of an optimal bidding strategy that knows the  distributions 
$P(i), V(i)$ for $i\in [m]$,  and it is allowed to submit a new bid vector at each time.
Our goal is to bound the regret of the bidding strategy of the advertiser defined by 
\[
OPT - \mathbb{E}\left[ \sum_{t=1}^\tau r_{x(\bm b_t)} \right],
\]
where the expectation is taken over the random variables $\tau, p(i), v(i), i\in [m]$.

\section{Discrete Bid Spaces}
\label{sec:algorithm}

We adapt the Primal Dual algorithm for bandits with knapsack constraints of \cite{Badanidiyuru:2018}. We use two resources in our formulation.
The budget $B$  is consumed at rate equal to the price paid for each ad auction that is won by the advertiser.  Furthermore, we have a second resource with budget $B$ that consumes deterministically a budget $B/T$ at each round of ad auction. The second resource ensures that the system can bid at most $T$ times on each platform.

For the regret analysis of the algorithm, following the primal dual approach of  \cite{Badanidiyuru:2018},  we compare with an upper bound on OPT  given by the following LP formulation: 

\begin{eqnarray}
OPT_{LP} =    {\rm max} ~ \sum_{x\in \mathcal{X}} \xi_x ~ \bar r_x  \\
 \sum_{x\in \mathcal{X} } \xi_x ~ \bar c_x &\leq&   B\\
  \sum_{x\in \mathcal{X} } \frac{B}{T} \xi_x &\leq&  B\\
\xi_x &\geq& 0, \forall x \in X 
\end{eqnarray}

with $\xi_x$ being the number of times arm $x\in \mathcal{X}$ is used during the $T$ rounds.

According to  the following claim, we can use the optimal LP solution in place of $OPT$ for bounding the regret of the algorithm:
\begin{claim}
$OPT_{LP}$ is an upper bound on the value of the optimal dynamic policy: $OPT_{LP} \geq OPT$.
\end{claim}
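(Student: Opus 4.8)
The plan is to prove $OPT_{LP}\ge OPT$ by exhibiting, from the optimal dynamic policy, a single feasible point of the LP whose objective value equals $OPT$; since $OPT_{LP}$ is the maximum over all feasible points, the inequality follows immediately. Fix an optimal dynamic policy $\pi^\star$ (it knows the distributions $P(i),V(i)$ but not the realizations, and may be adaptive and randomized), and let $\tau$ be its stopping time as defined above. For each arm $x\in\mathcal{X}$ let $N_x=\sum_{t=1}^{\tau}\mathbb{I}[x(\bm{b}_t)=x]$ be the random number of rounds in which $\pi^\star$ plays $x$, and set $\xi_x=\mathbb{E}[N_x]\ge 0$. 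This vector $\{\xi_x\}$ is the candidate LP solution, and the rest of the argument is to verify feasibility and that the objective reproduces $OPT$.

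The technical core is a Wald/optional-stopping identity that converts sums over the random horizon $\tau$ into the LP objective and the budget expression. The key observation is that $\tau$ is a stopping time: the event $\{t\le\tau\}$ is the complement of $\{\tau\le t-1\}$ and is therefore determined by the outcomes of rounds $1,\dots,t-1$, while the arm $x(\bm{b}_t)$ is fixed before the round-$t$ vectors $\bm{p}_t,\bm{v}_t$ are drawn. Conditioning on the history up to round $t-1$ and using that $\bm{p}_t\sim P$, $\bm{v}_t\sim V$ are independent of that history, I obtain $\mathbb{E}[\mathbb{I}[t\le\tau]\,r_{x(\bm{b}_t)}]=\mathbb{E}[\mathbb{I}[t\le\tau]\,\bar r_{x(\bm{b}_t)}]$, and likewise with $c$ and $\bar c$. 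Summing over $t\in[T]$ and regrouping by arm yields $\mathbb{E}[\sum_{t=1}^{\tau}r_{x(\bm{b}_t)}]=\sum_{x}\xi_x\,\bar r_x=OPT$ and $\mathbb{E}[\sum_{t=1}^{\tau}c_{x(\bm{b}_t)}]=\sum_{x}\xi_x\,\bar c_x$. This is exactly what lets me replace realized quantities by their expectations $\bar r_x,\bar c_x$ despite $\tau$ being correlated with the cost realizations.

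It then remains to check the two resource constraints and nonnegativity. Nonnegativity holds by construction. The time (second-resource) constraint is immediate: $\pi^\star$ runs for at most $T$ rounds, so $\sum_x N_x\le T$ deterministically, whence $\sum_x\xi_x\le T$, i.e. $\sum_x\frac{B}{T}\xi_x\le B$. For the budget constraint $\sum_x\xi_x\,\bar c_x\le B$ I combine the identity above with the definition of $\tau$: since $\tau$ is the first round at which the cumulative cost strictly exceeds $B$, the cost accumulated over the within-budget rounds is at most $B$, and accounting the reward over exactly these rounds gives $\sum_x\xi_x\,\bar c_x=\mathbb{E}[\sum c_{x(\bm{b}_t)}]\le B$. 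With all constraints verified, $\{\xi_x\}$ is LP-feasible and its objective equals $OPT$, so $OPT_{LP}\ge OPT$.

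The step I expect to require the most care is reconciling the terminal round with the strict budget constraint. Because $\tau$ depends on the realized costs, the round in which the budget is first exceeded may itself carry cost, so one cannot naively bound $\sum_{t=1}^{\tau}c_{x(\bm{b}_t)}$ by $B$; one must argue at the level of the within-budget prefix, or absorb the single overflow round, whose cost is at most $m$ since each of the $m$ platforms contributes at most $1$. Pinning down the exact bookkeeping here (so that the clean inequality $OPT_{LP}\ge OPT$ holds rather than an estimate off by a lower-order, single-round term) together with the measurability/independence justification in the Wald step is where the real work lies; the remaining feasibility checks are routine.
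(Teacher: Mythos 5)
The paper states this claim without any proof (it is imported wholesale from the BwK framework of Badanidiyuru et al.), so there is no in-paper argument to compare against. Your overall strategy --- certifying $OPT_{LP}\ge OPT$ by showing that the expected play-counts $\xi_x=\mathbb{E}[N_x]$ of the optimal dynamic policy form a feasible LP point with objective $OPT$ --- is exactly the standard one, and your handling of the Wald step for sums up to $\tau$ (using that $\{t\le\tau\}$ is determined by rounds $1,\dots,t-1$) is correct.

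The gap is the terminal-round bookkeeping, which you correctly flag as the delicate point but then leave open; moreover, neither of the two fixes you sketch actually yields the clean inequality. First, arguing ``at the level of the within-budget prefix'' $t\le\tau-1$ breaks your own Wald step: the indicator $\mathbb{I}[t\le\tau-1]=\mathbb{I}[\tau>t]$ depends on whether round $t$'s own realized cost pushes the cumulative spend over $B$, so it is not measurable with respect to the pre-round-$t$ history and you can no longer replace $c_{x(\bm{b}_t)}$ by $\bar c_{x(\bm{b}_t)}$ under it --- only $\mathbb{I}[t\le\tau]$ is predictable. Second, absorbing the overflow round gives only $\sum_x\xi_x\bar c_x\le B+m$, and rescaling to restore feasibility yields $OPT_{LP}\ge\frac{B}{B+m}\,OPT$, not $OPT_{LP}\ge OPT$. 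The standard resolution, which you should make explicit, is a convention on the benchmark: the optimal dynamic policy is restricted to respect the budget with probability one (it can always submit the all-zero bid vector and win nothing), so for that policy $\tau=T+1$, there is no overflow round, all sums run over the deterministic horizon $T$, and no optional-stopping argument is needed at all --- linearity of expectation plus the tower property give $\sum_x\xi_x\bar c_x=\mathbb{E}\left[\sum_{t=1}^T c_{x(\bm{b}_t)}\right]\le B$ and $\sum_x\xi_x\bar r_x=OPT$ exactly. With that convention your construction closes; without it, the claim as stated only holds up to a lower-order correction of order $m$ or $m\cdot OPT/B$, which would then have to be tracked through the regret bound.
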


In the algorithm, we denote by $\bar v_{ij}$ and $\bar c_{ij}$ the expected reward and the expected cost, respectively, 
obtained by bidding value $b_j$ on platform $i$.  We also denote by $v_{ij}^{UCB}(t)$ the upper confidence bound estimation of $\bar v_{ij}$ 
and by  $c^{LCB}_{ij}$ the lower confidence bound estimation of $\bar c_{ij}$.  Concretely, they are defined as follows. Here $N_{ij}(t)$ denotes the number of times we bid $i$ on platform $j$ until time $t$ and $C_{rad} = \Theta(\log mnT)$.
	\begin{align*}
		v_{ij}^{UCB}(t) & := \frac{1}{N_{ij}(t)} \sum_{t' < t} v_{ij}(t) \cdot \mathbb{I}[\text{bid $i$ on platform $j$}]\\
		 & \qquad \qquad + \sqrt{\frac{C_{rad}\cdot \bar v_{ij}}{N_{ij}(t)}} + \frac{C_{rad}}{N_{ij}(t)} \\
		c_{ij}^{LCB}(t) & := \frac{1}{N_{ij}(t)} \sum_{t' < t} c_{ij}(t) \cdot \mathbb{I}[\text{bid $i$ on platform $j$}] \\
		& \qquad \qquad - \sqrt{\frac{C_{rad}\cdot \bar c_{ij}}{N_{ij}(t)}} - \frac{C_{rad}}{N_{ij}(t)}
	\end{align*}

We  denote by $\lambda_t(1)$ and $\lambda_t(2)$ the estimation 
of the dual variables computed by the algorithm. We have $d=2$ in our case but we leave a generic term $d$ for the purpose of an easy generalization to the case of an individual budget $B_i$ for each $i\in [m]$.

\begin{algorithm}
\caption{Multi-platform bidding with parameter $\epsilon \in (0, 1)$}
\begin{algorithmic}

\STATE{\bf INITIALIZATION}
	\STATE{Play arm $x_{\bm b_j}$ with bid $\bm b_j =  \{b(i)=b_j \}_{i\in [m]}, j\in n$ for an initial UCB estimate of $\bar r_{ij}$ and an initial LCB estimate of $\bar c_{ij}$ }
	\STATE{$\lambda_1(1)=1$, $\lambda_1(2)=1$  is the estimate for $\eta_1$ and $\eta_2$}
	\STATE{set $\epsilon = \sqrt{\ln (d) /B}$}
	\FOR{$t=n+1, \ldots, \tau $ (i.e., until resource budget is exhausted)}
		\STATE{Obtain the maximizer to the following
		\begin{eqnarray*} 
		& \alpha_{i,j_i^*} = \\
		 & \argmax_{\alpha_{ij}\in \{0,1\}} & \frac{\sum_{i\in [m]} \sum_{j\in [n]} r_{ij}^{UCB} \alpha_{ij}}
        		{ \lambda_t(1) \sum_{i\in [m]} \sum_{j\in [n]} c_{ij}^{LCB} \alpha_{ij} + \lambda_t(2)\frac{B}{T}} \\
		& {\rm s.t.} \\
		& \sum_{j\in n} \alpha_{ij}=1, \forall i\in [m]
		\end{eqnarray*} }
		\STATE{Play arm $x_{\bm b_t}$ with bid $\bm b_t =  \{b_{i,j_i^*} \}_{i\in [m]}$}
		\STATE{Update UCB estimate $r_{ij}^{UCB}$ and LCB estimate $c_{ij}^{LCB}$}
		\STATE{Compute LCB estimate cost of arm $x_t$: $c_{1x_t}^{LCB} = \sum_{i\in m} c_{i, j_i^*}^{LCB}$, $c_{2x_t}=B/T$}
		\STATE{Update estimate of dual variables: }
		\STATE{$\lambda_{t+1}(1) = \lambda_{t}(1) (1 + \epsilon)^{c_{1x_t}^{LCB}}$}
		\STATE{$\lambda_{t+1}(2) = \lambda_{t} (1)(1 + \epsilon)^{c_{2x_t}}$}
	\ENDFOR	
	
\end{algorithmic}

\label{alg:MultiBidding}

\end{algorithm}

In the initial exploration phase, bid $b_i$ is submitted on all platforms at time $t=i$. The cost of the initial phase is therefore bounded by $mn$. 
At each round we  compute the arm with bid vector that maximizes the \emph{bang-per-buck} ratio between the Upper Confidence Bound of the reward and the Lower Confidence Bound of the  normalized cost. The number of different arms that is exponential can be reduced in the analysis by pruning out suboptimal arms. The optimal arm according to the $UCB$ and $LCB$ approximations can actually be computed in polynomial time since this is the problem of 
optimizing a rational function subject to a set of linear constraints described by a totally unimodular matrix \cite{AVADHANULA2016612}. After the feedback is received, the UCB estimation of the rewards and the LCB estimation of the costs are updated. Variables $\lambda(1), \lambda(2)$ are estimated using multiplicative weight update \cite{LW94}.

\subsection{Analysis of the algorithm.}
\label{regret}
\label{sec:analysis}

Once the problem of selecting the arm has been addressed through a separate optimization step, the analysis of the algorithm follows very closely the one of  \cite{Badanidiyuru:2018} with some care that allows to replace the exponential number of arms with $mn$ in the regret bound. 

Let $y_t(i) = \lambda_t(i)/{\parallel \lambda_t \parallel}_1, i=1,\ldots, d$ be the normalized cost of the resources.  For a parameter $\epsilon \in [0,1]$,  for every vector $y$, for any sequence of payoff vectors $c_1, \ldots, c_{\tau} \in [0,1]^d$, Hedge's guarantee gives \cite{LW94}: 

\begin{eqnarray}
\label{eq:Hedge}
\sum_{t=1}^\tau y_t^T c_t \geq (1-\epsilon) \sum_{t=1}^\tau y^T c_t  - \frac{\ln d}{\epsilon}.
\end{eqnarray}

In what follows, we  denote by $ c_{x_t} = \left[
                \begin{array}{ll}
                  c_{1x_t} \\
                  c_{2x_t}\\
                 \end{array}
              \right],$ the cost vector of arm $x_t$.

We must  consider in the analysis the error that derives from using the UCB estimate for the rewards and LCB estimate for the costs.  
First of all, given that the maximum reward is $mT$, if we fail to have a clean execution, the loss is $O(mT)$.  
If we select $C_{rad} = \Theta(\log d T m)$, the  probability of failure can be made much smaller than $1/(mT)$.  


Let us denote by $l_{x_t}$ the LCB estimate of $c_{x_t}$ and by $u_{x_t}$ the UCB estimate of $r_{x_t}$. 
Moreover, let $E_t=c_{x_t}-l_{x_t}$ be the error on the cost and $\delta_t = u_{x_t} - r_{x_t}$ the error on the reward. 
We also denote by $REW^{UCB} = \sum_t u^T_{x_t} \xi_t$ the UCB reward of the algorithm. 
 
\begin{claim}
\label{claim:lbrewucb}
\begin{eqnarray*}
	& REW \geq OPT_{LP} \left[ (1-\epsilon) - \frac{mn +1}{B} - \frac{1}{B} ||\sum_{1<t<\tau}  E^T_t \xi_t ||_{\infty}  - \frac{1}{B} \frac{\ln d}{\epsilon } \right] \\
	& - \left| \sum_{1<t<\tau} \delta_t^T \xi_t \right|.
\end{eqnarray*}
\end{claim}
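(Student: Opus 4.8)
The plan is to replay the deterministic warm-up argument sketched above (the chain $REW \ge W\cdot OPT_{LP}/B$ followed by the Hedge lower bound on $W$), but to run the \emph{entire} chain with the estimated quantities $u_{x_t}$ and $l_{x_t}$ in place of the true reward and cost, and to pass back to the true reward $REW$ and to the true budget consumption only at the very end. This is exactly where the two estimation errors $\delta_t = u_{x_t}-r_{x_t}$ and $E_t = c_{x_t}-l_{x_t}$ will surface. Throughout I condition on a \emph{clean execution}, i.e.\ the event (of probability at least $1-1/(mT)$ by the choice $C_{rad}=\Theta(\log dTm)$) on which every UCB overestimates its true expected reward and every LCB underestimates its true expected cost, so that $\delta_t\ge 0$ and $E_t\ge 0$ coordinatewise; on the complementary event the loss is only $O(mT)$ and is folded into the regret.

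First I would prove the estimated analogue of the bang-per-buck bound, namely
$$ REW^{UCB} \;\ge\; \frac{OPT_{LP}}{B}\, W^{LCB}, \qquad\text{where } W^{LCB} := \sum_{t} y_t^{T} l_{x_t}. $$
Since the selection step picks the arm $x_t$ maximizing $u_{x}/(y_t^{T}l_{x})$ over the partition-matroid feasible set, substituting the optimal LP solution $\xi^{*}$ gives $y_t^{T} l_{x_t} \le u_{x_t}\,(y_t^{T} l\,\xi^{*})/(u^{T}\xi^{*})$. On a clean execution $u^{T}\xi^{*} \ge \sum_x \xi^{*}_x \bar r_x = OPT_{LP}$, so the right-hand side only grows when $u^{T}\xi^{*}$ is replaced by $OPT_{LP}$; moreover $l\,\xi^{*}\le \bar c\,\xi^{*}$ coordinatewise, whose first component is $\le B$ by the knapsack constraint and whose second is $\frac{B}{T}\sum_x\xi^{*}_x\le B$ by the second LP constraint. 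Summing over $t$, introducing the reward-weighted average $\bar y = \frac{1}{REW^{UCB}}\sum_t u_{x_t} y_t$ (a distribution over the $d$ resources), and using $\bar y^{T} l\,\xi^{*}\le B$ delivers the displayed inequality.

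Next I would lower bound $W^{LCB}$ via Hedge's guarantee~\eqref{eq:Hedge}, taking the fixed comparator $y=e_{k^\ast}$ to be the indicator of the binding resource $k^\ast$. At the stopping time $\tau$ that resource is consumed up to $B$, and at most $mn$ of it (plus a single overshoot step) is spent during initialization, so $\sum_{1<t<\tau} c_{k^\ast,x_t}\ge B-mn-1$. Converting the true consumption driving the budget to the LCB costs driving Hedge costs precisely $\sum_t E_{k^\ast,t} \le \big\|\sum_{1<t<\tau} E_t^{T}\xi_t\big\|_\infty$ (both nonnegative on a clean execution), whence
$$ W^{LCB} \;\ge\; (1-\epsilon)\Big[(B-mn-1) - \big\|{\textstyle\sum_{1<t<\tau}} E_t^{T}\xi_t\big\|_\infty\Big] - \frac{\ln d}{\epsilon}. $$
Combining with the previous display, dividing the bracket by $B$, and absorbing the harmless factors $(1-\epsilon)\le 1$ on the $mn$ and error terms yields the stated bracketed lower bound on $REW^{UCB}$.

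Finally I would pass from $REW^{UCB}$ to $REW$: by definition $REW^{UCB}=\sum_t u_{x_t}^{T}\xi_t = REW + \sum_t \delta_t^{T}\xi_t$, so $REW \ge REW^{UCB} - \big|\sum_{1<t<\tau}\delta_t^{T}\xi_t\big|$, and substituting gives the claim. The main obstacle is bookkeeping the two errors consistently: the budget is driven by the \emph{true} costs while both the selection rule and Hedge are driven by the \emph{LCB} costs, so the $E_t$ conversion must be performed on the binding coordinate and only \emph{after} the Hedge comparator $e_{k^\ast}$ has been fixed; symmetrically, the reward gap $\delta_t$ must be peeled off only \emph{after} the multiplicative amplification by $OPT_{LP}/B$, since it enters the final bound additively rather than multiplicatively.
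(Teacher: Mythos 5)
Your proof is correct and follows essentially the same route as the paper's: the same reward-weighted average $\bar y$, the same use of primal feasibility, clean execution and the bang-per-buck selection rule to get $REW^{UCB}\ge \frac{OPT_{LP}}{B}\sum_t y_t^T l_{x_t}$, the same Hedge comparator at the binding resource with the $B-mn-1$ stopping argument, and the same final peeling of $E_t$ and $\delta_t$. The only difference is presentational (you split the paper's single chain of inequalities into two intermediate lemmas), so nothing further is needed.
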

\begin{proof}
The structure of this proof is similar to \cite{Badanidiyuru:2018}.
Let  $\bar y = \frac{1}{REW^{UCB}} \sum_{n<t<\tau} (u^T_{x_t} \xi_t ) y_t$
We prove the following inequalities:

\begin{eqnarray*}
B & \geq & \textstyle {\bar y}^T c \xi^* \\ 
    & = &  \textstyle  \frac{1}{REW^{UCB}} \sum_{n<t<\tau} (u^T_{x_t} \xi_t)(y_t^T c \xi^*) \\ 
   &\geq & \textstyle \frac{1}{REW^{UCB}} \sum_{n<t<\tau} (u^T_{x_t} \xi_t)(y_t^T l_{x_t} \xi^*)  \\
                 &\geq &\textstyle \frac{1}{REW^{UCB}} \sum_{n<t<\tau} (u^T_{x_t} \xi^*)(y_t^T l_{x_t} \xi_t)\\  
   &\geq &\textstyle   \frac{1}{REW^{UCB}} \sum_{n<t<\tau} (r^T  \xi^*)(y_t^T l_{x_t} \xi_t) \\
 &\geq &\textstyle  \frac{OPT_{LP}}{REW^{UCB}} \left[(1-\epsilon) y^T \left( \sum_{n<t<\tau}  l_{x_t} \xi_t  \right) -\frac{\ln d}{\epsilon} \right] \\
 &\geq &\textstyle  (1-\epsilon) \frac{OPT_{LP}}{REW^{UCB}}  \\
 && \textstyle \left[ y^T \left( \sum_{n<t<\tau}  c_{x_t} \xi_t  \right) - y^T\left( \sum_{n<t<\tau}  E^T_t \xi_t  \right) -  \frac{\ln d}{\epsilon} \right] \\
 &\geq & \textstyle  \frac{OPT_{LP}}{REW^{UCB}} \\ 
 && \textstyle \left[ (1-\epsilon) (B-mn-1) - (1-\epsilon)  y^T \left( \sum_{n<t<\tau}  E^T_t \xi_t  \right) -  \frac{\ln d}{\epsilon} \right],
\end{eqnarray*}

The first inequality follows from primal feasibility, the second inequality by the definition of $\bar y$, the third inequality by clean execution, the forth inequality by the rule of selection of the arm, the fifth inequality follows  from clean execution, and  the sixth inequality from the guarantee of Hedge of equation \ref{eq:Hedge}.

For bounding the regret of the algorithm we finally use: 

\[
REW \geq REW^{UCB} - \sum_{n<t<\tau} (u_{x_t} - r_{x_t})^T \xi_t = REW^{UCB} - \left| \sum_{1<t<\tau} \delta_t^T \xi_t \right|.
\]
\end{proof}


We combine this claim with the following two bounds which can again be derived using the approach in \cite{BKS13}.
\begin{equation*}
	\left|\sum_{n<t<\tau} \delta_t^T \xi_t \right| = O\left( \sqrt{C_{rad} mn REW} + C_{rad} mn \ln T\right), 
\end{equation*}
\begin{equation*}
	\left |\left|\sum_{1<t<\tau}  E_t \xi_t \right |\right|_{\infty} = O\left( \sqrt{C_{rad}  mn B} + C_{rad} m n \ln T \right).
\end{equation*}
	By assuming $mn<B$ and $\epsilon = \sqrt{\frac{\ln d}{B}}$, we conclude with the following theorem:.

\begin{theorem}
\label{thm:maindiscrete}
The regret of the algorithm is bounded by 
	\[O\left(OPT_{LP} \sqrt {\frac{mn}{B} }+ \sqrt{mn OPT}\right).\]
\end{theorem}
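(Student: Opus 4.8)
The plan is to assemble Theorem~\ref{thm:maindiscrete} by substituting the two concentration bounds on $\sum \delta_t^T \xi_t$ and $\|\sum E_t \xi_t\|_\infty$ directly into the lower bound on $REW$ proved in Claim~\ref{claim:lbrewucb}, and then simplifying under the standing assumption $mn < B$. Concretely, I would start from
\[
REW \geq OPT_{LP}\left[(1-\epsilon) - \frac{mn+1}{B} - \frac{1}{B}\Big\|\textstyle\sum_{1<t<\tau} E_t^T \xi_t\Big\|_\infty - \frac{1}{B}\frac{\ln d}{\epsilon}\right] - \Big|\textstyle\sum_{1<t<\tau}\delta_t^T \xi_t\Big|,
\]
rearrange into the regret form $OPT_{LP} - REW$, and bound each negative contribution in turn. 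With $\epsilon = \sqrt{(\ln d)/B}$ the two terms $\epsilon$ and $\frac{1}{B}\frac{\ln d}{\epsilon}$ each become $\sqrt{(\ln d)/B}$, contributing $OPT_{LP}\cdot O(\sqrt{1/B})$ to the regret, which is dominated by the $OPT_{LP}\sqrt{mn/B}$ term once $mn \geq 1$. The term $\frac{mn+1}{B}\,OPT_{LP}$ is likewise $\leq OPT_{LP}\sqrt{mn/B}$ when $mn \leq B$, so it folds into the first target term.

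The more delicate contributions are the two that carry the estimation error. Using the stated bound $\|\sum E_t \xi_t\|_\infty = O(\sqrt{C_{rad}\,mn\,B} + C_{rad}\,mn\ln T)$ and dividing by $B$ gives a contribution to the regret of $\frac{OPT_{LP}}{B}\cdot O(\sqrt{mnB}) = OPT_{LP}\cdot O(\sqrt{mn/B})$ (suppressing the $C_{rad}=\tilde O(1)$ factor), matching the first term of the claimed bound; the lower-order $C_{rad}\,mn\ln T / B$ piece is negligible under $mn<B$. For the reward-error term I would plug in $|\sum \delta_t^T \xi_t| = O(\sqrt{C_{rad}\,mn\,REW} + C_{rad}\,mn\ln T)$. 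Since $REW \leq OPT_{LP} \leq OPT$ up to lower-order slack, $\sqrt{mn\,REW} = O(\sqrt{mn\,OPT})$, which is exactly the second term of the theorem.

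The one genuinely non-routine step is handling the self-referential appearance of $REW$ inside the bound on $|\sum \delta_t^T \xi_t|$: the inequality has $REW$ on the left and $\sqrt{REW}$ on the right. I would resolve this with the standard trick of treating it as a quadratic inequality in $\sqrt{REW}$ — writing $REW \geq A - c\sqrt{REW}$ for suitable $A = OPT_{LP}(1 - O(\sqrt{mn/B}))$ and $c = O(\sqrt{C_{rad}\,mn})$, solving for the admissible root, and concluding that $REW \geq A - O(\sqrt{mn\,A}) = OPT_{LP} - O(OPT_{LP}\sqrt{mn/B} + \sqrt{mn\,OPT})$. An alternative that avoids the quadratic is to bound $REW \leq OPT_{LP} \leq OPT$ crudely inside the square root first, which immediately yields $\sqrt{mn\,OPT}$ without the recursion; I expect this cruder substitution to suffice and to be the cleanest route. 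Collecting all pieces, the dominant terms are precisely $OPT_{LP}\sqrt{mn/B}$ and $\sqrt{mn\,OPT}$, and replacing $OPT_{LP}$ by $OPT$ is justified by Claim~1 ($OPT_{LP}\geq OPT$) together with the $\tilde O$ absorption of the $\log$ factors, giving the stated regret $O\!\left(OPT\sqrt{mn/B} + \sqrt{mn\,OPT}\right)$.
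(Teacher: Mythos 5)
Your proposal is correct and follows essentially the same route as the paper's proof: plug the two concentration bounds into Claim~\ref{claim:lbrewucb}, take $\epsilon=\sqrt{(\ln d)/B}$, and absorb the lower-order terms under $mn<B$; you are, if anything, more explicit than the paper in resolving the $\sqrt{REW}$ self-reference via the quadratic trick. The only blemish is the chain $REW \le OPT_{LP} \le OPT$, whose second inequality is reversed (the paper establishes $OPT_{LP}\ge OPT$); this does not change the argument, since the paper itself leaves the passage from $\sqrt{mn\,REW}$ to $\sqrt{mn\,OPT}$ equally implicit.
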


\begin{proof}
	We use for the proof the  following two claims on the UCB estimate of the reward and the LCB estimate of the cost  are proved in \cite{Badanidiyuru:2018}:

\begin{claim}
\label{claim:error}
$\left|\sum_{n<t<\tau} \delta_t^T \xi_t \right| = O\left( \sqrt{C_{rad} mn REW} + C_{rad} mn \ln T\right) $
\end{claim}

\begin{claim}
$||\sum_{1<t<\tau}  E_t \xi_t ||_{\infty} = O\left( \sqrt{C_{rad}  mn B} + C_{rad} m n \ln T \right) $.
\end{claim}

We start from the claim of Claim \ref{claim:lbrewucb}:

\begin{eqnarray*}
REW &\geq & \textstyle  OPT_{LP} \left[ (1-\epsilon) - \frac{mn +1}{B} - \frac{1}{B} ||\sum_{1<t<\tau}  E_t \xi_t ||_{\infty} -  \frac{\ln d}{\epsilon B} \right] \\
&& \textstyle - \left| \sum_{1<t<\tau} \delta_t^T \xi_t \right| 
\end{eqnarray*}

%

By assuming $mn< B/\ln d T$ and  by choosing $\epsilon = \sqrt{\frac{\ln d}{B}}$,  we bound the following three terms of the regret:

\begin{eqnarray*}
OPT_{LP} \left[ \frac{mn+1}{B} +  \frac{\ln d}{\epsilon B} \right] = O\left( OPT_{LP} \left[ \frac{mn}{B} + \sqrt{\frac{\ln d}{B}}\right]\right);
\end{eqnarray*}

\begin{eqnarray*}
&& \frac{OPT_{LP} }{B} ||\sum_{1<t<\tau}  E_t \xi_t ||_{\infty} \\
&=& O\left( \frac{OPT_{LP}}{B} \left[ \sqrt{C_{rad}  mn B} + C_{rad} mn \ln T \right] \right) \\
&=& O\left( \frac{OPT_{LP}}{\sqrt B} \sqrt{C_{rad}  mn} \right)
\end{eqnarray*}

and 
\begin{eqnarray*}
\textstyle \left| \sum_{1<t<\tau} \delta_t^T \xi_t\right| = O\left( \sqrt{C_{rad} mn REW} + C_{rad} mn (\ln T) \right)
\end{eqnarray*}

thus proving the claim. 
\end{proof}

\section{Continuous Bid Spaces}
\label{sec:continuous}

In the previous section we considered the discrete setting with $n$ different bid values available  on each platform for the advertiser.  In this section we  consider the continuous setting, with  prices and  valuations being real values in $[0, 1]$. Our approach will be to discretize the continuous bidding space to consider bid values that are multiple of some small value $\epsilon$. The discretization of the bidding space will result into an additional regret loss.  Most of the following analysis is therefore concerned with bounding the error of the discretization process. 

Let the support of the critical bid distribution be in the interval $[p_0,1]$, with $p_0$ being a small constant that can be considered as the reserve price for the ad auction. Let $r_{b}(i)$ and $c_{b}(i)$ the expected reward and the expected cost of bid $b$ on platform $i\in [m]$, where the expectation is taken over the critical bid $p(i)\sim P(i)$ and the reward values $v(i)\sim V(i)$. We also denote by $v_0$ the maximum expected reward over all platforms if the auction is won. 
 
The following Lemma shows that the discretization of the continuous bidding space is performed  at the expense of a limited additive loss in the buck-per-bang ratio. 

\begin{lemma}
\label{lemma:discretization}
For each bid $b\geq p_0$, and for each platform $i\in [m]$, it holds: 

\begin{enumerate}
\item  $c_{b+\epsilon}(i) \geq c_{b}(i)$ 
\item $\frac{r_{b+\epsilon}(i)}{c_{b+\epsilon}(i)} - \frac{r_{b}(i)}{c_{b}(i)} \geq \frac{\epsilon v_0 }{p_0^2}$
\end{enumerate}
\end{lemma}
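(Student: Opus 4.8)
The plan is to reduce both parts to elementary properties of the win probability and the conditional price, exploiting that $v(i)$ and $p(i)$ are independent. Writing $F_i$ for the CDF of the critical-bid law $P(i)$ and $\bar v(i)=\mathbb{E}[v(i)]$, independence gives $r_b(i)=\bar v(i)\,F_i(b)$ and $c_b(i)=\mathbb{E}_{p\sim P(i)}\big[p\,\mathbb{I}[p\le b]\big]$. Throughout I would use that the support of $P(i)$ lies in $[p_0,1]$, so that every won auction costs at least $p_0$; in particular $c_b(i)\ge p_0\,F_i(b)$, and $\bar v(i)\le v_0$ by the definition of $v_0$ as the largest expected reward.

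First I would dispatch part 1: since $c_{b+\epsilon}(i)-c_b(i)=\mathbb{E}\big[p\,\mathbb{I}[b<p\le b+\epsilon]\big]\ge 0$, the integrand being nonnegative, we get $c_{b+\epsilon}(i)\ge c_b(i)$. The same decomposition records the monotonicity $F_i(b+\epsilon)\ge F_i(b)$ and $r_{b+\epsilon}(i)\ge r_b(i)$, which I will reuse when fixing the direction of the bound in part 2.

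For part 2, the goal is $\frac{r_{b+\epsilon}(i)}{c_{b+\epsilon}(i)}-\frac{r_b(i)}{c_b(i)}\ge\frac{\epsilon v_0}{p_0^2}$, and I would attack it by placing the difference over a common denominator,
\[
\frac{r_{b+\epsilon}(i)}{c_{b+\epsilon}(i)}-\frac{r_{b}(i)}{c_{b}(i)}
=\frac{r_{b+\epsilon}(i)\,c_{b}(i)-r_{b}(i)\,c_{b+\epsilon}(i)}{c_{b}(i)\,c_{b+\epsilon}(i)},
\]
and estimating numerator and denominator separately. The denominator I bound via the reserve price, $c_b(i)\,c_{b+\epsilon}(i)\ge p_0^2\,F_i(b)F_i(b+\epsilon)$, which is exactly where the $p_0^2$ in the target appears. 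The numerator I would rewrite so that it captures only the marginal auctions won in the band $(b,b+\epsilon]$; since every such critical bid lies in an interval of width $\epsilon$, the marginal contribution is of order $\bar v(i)\,F_i(b)\,\epsilon$ relative to the base win mass $F_i(b)$. Dividing by the denominator bound and using $\bar v(i)\le v_0$ then produces the additive term $\frac{\epsilon v_0}{p_0^2}$ asserted in the statement.

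The hard part will be extracting the factor $\epsilon$ (rather than a crude constant) from the marginal band and pinning down the sign: one must show that raising the bid by $\epsilon$ moves the conditional average price by only $O(\epsilon)$ relative to the accumulated cost $c_b(i)$, so that the change in the ratio scales linearly in $\epsilon$, and one must invoke the monotonicity recorded in part 1 to control the direction of the comparison uniformly down to $b=p_0$, where $F_i(b)$ is smallest and the estimates are tightest. Once the per-platform bound $\frac{\epsilon v_0}{p_0^2}$ is in hand, summing over the $m$ platforms and over the rounds in which the budget is consumed is what yields the global discretization loss $\frac{B\epsilon v_0}{p_0^2}$ feeding the continuous-setting regret bound.
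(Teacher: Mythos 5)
Your part~1 is fine and matches the paper. The problem is part~2: you take the printed inequality at face value and set out to prove that raising the bid by $\epsilon$ \emph{increases} the bang-per-buck ratio by at least $\epsilon v_0/p_0^2$. That statement is false, and your own setup shows it. By independence, $r_b(i)=\bar v(i)F_i(b)$ and $c_b(i)=\mathbb{E}\bigl[p\,\mathbb{I}[p\le b]\bigr]\le b\,F_i(b)$, so $r_b(i)/c_b(i)\ge \bar v(i)/b$; meanwhile every marginal auction won in the band $(b,b+\epsilon]$ costs at least $b$, so the band's reward-to-cost ratio is at most $\bar v(i)/b$. By the mediant inequality, mixing in a component with a smaller ratio can only lower the cumulative ratio, hence
\[
\frac{r_{b+\epsilon}(i)}{c_{b+\epsilon}(i)}\;\le\;\frac{r_{b}(i)}{c_{b}(i)},
\]
so the left-hand side of part~2 is never positive and can never exceed $\epsilon v_0/p_0^2>0$. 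The printed statement is a sign typo: what the paper's proof actually establishes (and what the discretization argument feeding Theorem~\ref{thm:fullRegret} needs) is the reversed bound $\frac{r_b(i)}{c_b(i)}-\frac{r_{b+\epsilon}(i)}{c_{b+\epsilon}(i)}\le\frac{\epsilon v_0}{p_0^2}$, i.e., rounding a bid \emph{up} to the grid loses at most $\epsilon v_0/p_0^2$ in bang-per-buck. You flagged ``pinning down the sign'' as the hard part and left it open --- it is not merely hard, it is impossible in the direction you chose.

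Beyond the sign, the quantitative core of the lemma is exactly the step you defer. The paper writes $a=\int_b^{b+\epsilon}f(p)\,dp$, expands $r_{b+\epsilon}=r_b+av$ and $c_{b+\epsilon}\le c_b+(b+\epsilon)a$, puts the difference over a common denominator just as you do, and then controls it using $c_b\ge r_b p_0/v$, $c_b\ge p_0$, $v\le v_0$ and $a\le\epsilon$, arriving at $a v_0/p_0^2$. If you push your own route through, the denominator bound $c_b(i)\,c_{b+\epsilon}(i)\ge p_0^2F_i(b)F_i(b+\epsilon)$ leaves uncancelled factors: writing $\Delta F=F_i(b+\epsilon)-F_i(b)$, the numerator magnitude is at most $\bar v(i)F_i(b)\,\Delta F\,\bigl((b+\epsilon)-p_0\bigr)$, so after dividing you still carry $\Delta F/F_i(b+\epsilon)$, and you need both $\Delta F\le\epsilon$ (a bounded-density assumption, which the paper invokes as ``continuity of $f$'') and a lower bound on $F_i(b+\epsilon)$ (the paper's role for $c_b\ge p_0$) to land on $\epsilon v_0/p_0^2$. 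None of this is carried out in your sketch, so even after correcting the sign the proposal remains an outline of the paper's computation rather than a proof.
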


\begin{proof}
The first part of the claim is straightforward since the expected cost can only increase with the value of the bid. 
We omit $i$ for the proof of the second part of the claim. 

Let $f(p)$ be the continuous density function of the cumulative distribution $P$. For the expected cost of bid $b$, it holds

\begin{eqnarray*}
c_b = \int_{p_o}^b p f(p) dp \geq  r_b p_0/v, 
\end{eqnarray*}
since the ratio between cost and reward of bid $b$ is at least the minimum cost $p_0$ divided by the  reward $v$ that 
is obtained if the bid is accepted.  

Denote $a=\int_{b}^{b+\epsilon} f(p) dp$. We obtain the following expressions for the  cost and the reward  at bid $b+\epsilon$:  
\begin{eqnarray*}
r_{b+\epsilon} &=& r_b + v \int_{b}^{b+\epsilon} f(p) dp = r_b + a \cdot v \\ 
c_{b+\epsilon} &=& c_b + \int_{b}^{b+\epsilon} p  f(p) dp \leq c_b + (b+\epsilon) a
\end{eqnarray*}

We therefore have: 

\begin{eqnarray*}
\frac{r_b}{c_b} - \frac{r_{b+\epsilon}}{c_{b+\epsilon}} & \leq& \frac{r_b}{c_b} -  \frac{r_b + a \cdot v}{c_{b} +  (b+\epsilon) a} \\ 
										& =& \frac{r_b  (b+\epsilon) a - a c_b v }{c_b (c_b + (b+\epsilon) a)}\\
										&\leq&  \frac{r_b  (b+\epsilon) a - a p_0 r_b  }{ r_b p_0 /v ( p_0  + (b+\epsilon) a)} \\
										&=& \frac{a ((b+\epsilon) -  p_0)}{p_0/v  (p_0 + (b+\epsilon) a)} \\
										&\leq& \frac{a v_0 }{p_0^2},
\end{eqnarray*}

with the previous inequalities that follow from $c_b \geq r_b p_0/v$, $c_b\geq p_0$, and $v\leq v_0$.  By continuity of $f(b)$,  $a = \int_{b}^{b+\epsilon} f(p) dp \leq \epsilon$ proves the result. 
\end{proof}

The total loss in the regret due to the discretization error of the buck-per-bang ratio is therefore upper-bounded by $\frac{B\epsilon v_0}{p_0^2}$ to be added to the 
regret bound of Theorem \ref{thm:maindiscrete}.  The set of distinct bid values  is $b_j = j \epsilon, j\in [n]$, and therefore $n$ can be replaced by $1/\epsilon$ in the claim of Theorem \ref{thm:maindiscrete}. 
	
We therefore conclude with the following theorem:

\begin{theorem}
	\label{thm:fullRegret}
The regret of the algorithm with discretized bids is bounded by 

\[ 
	\textstyle \tilde{O}\left(\frac{(mv_0)^{1/3}}{p_0^{2/3}} \cdot \min\left\{ \left( \frac{Bv_0}{p_0} \right)^{2/3}, (m T)^{2/3} \right\} \right)
\]
\end{theorem}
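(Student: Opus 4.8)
The plan is to derive Theorem~\ref{thm:fullRegret} directly from the discrete bound of Theorem~\ref{thm:maindiscrete} together with the discretization estimate of Lemma~\ref{lemma:discretization}, closing with a one-parameter optimization over the grid width $\epsilon$. Concretely, I would run Algorithm~\ref{alg:MultiBidding} on the uniform $\epsilon$-grid $\{j\epsilon : j\in[n]\}$ of $[p_0,1]$, so that each platform has $n = 1/\epsilon$ available bids. Theorem~\ref{thm:maindiscrete} then controls the regret incurred \emph{relative to the best grid policy} by $O\!\left(OPT_{LP}\sqrt{m/(\epsilon B)} + \sqrt{m\,OPT/\epsilon}\right)$, and it remains to (i) charge the gap between the best grid policy and the true continuous optimum, and (ii) tune $\epsilon$ to balance the two error sources.

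For step (i) I would invoke Lemma~\ref{lemma:discretization}: rounding any bid $b\ge p_0$ up to the nearest grid point never decreases the cost (part~1) and decreases the per-platform bang-per-buck ratio $r_b(i)/c_b(i)$ by at most $\epsilon v_0/p_0^2$ (part~2). Since the primal-dual argument of Section~\ref{sec:analysis} drives the accumulated reward through precisely this ratio (the arm is chosen to maximize the UCB/LCB bang-per-buck), a uniform downward perturbation of every arm's ratio by $\epsilon v_0/p_0^2$ costs at most $\epsilon v_0/p_0^2$ per unit of consumed cost. Aggregating over the total cost, which is at most the budget $B$, produces an additive discretization loss of at most $B\epsilon v_0/p_0^2$. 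Adding this to Theorem~\ref{thm:maindiscrete} with $n=1/\epsilon$ gives a regret of order
\[
OPT_{LP}\sqrt{\tfrac{m}{\epsilon B}} \;+\; \sqrt{\tfrac{m\,OPT}{\epsilon}} \;+\; \tfrac{B\epsilon v_0}{p_0^2}.
\]

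For step (ii) note first that the middle term $\sqrt{m\,OPT/\epsilon}$ is dominated by the first once $OPT=\Omega(B)$, since the ratio of the first to the second term is of order $\sqrt{OPT/B}$; this is the regime of interest. The leading behaviour is therefore set by balancing the $\epsilon^{-1/2}$ term against the $\epsilon$-linear term, which is achieved (up to logarithmic factors) at $\epsilon \asymp \big(OPT_{LP}\,p_0^2\sqrt{m}/(v_0 B^{3/2})\big)^{2/3}$ and yields a bound of order $(OPT_{LP})^{2/3}(mv_0)^{1/3}/p_0^{2/3}$. To recover the stated minimum I would upper-bound $OPT_{LP}$ using the two LP constraints separately: the price-budget constraint together with the per-win bang-per-buck bound $v_0/p_0$ gives $OPT_{LP}\le Bv_0/p_0$, while the time constraint $\sum_x \xi_x\le T$ together with $\bar r_x\le m$ gives $OPT_{LP}\le mT$. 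Since $(\cdot)^{2/3}$ is monotone, $(OPT_{LP})^{2/3}\le \min\{(Bv_0/p_0)^{2/3},(mT)^{2/3}\}$, which produces exactly
\[
\tilde{O}\!\left(\tfrac{(mv_0)^{1/3}}{p_0^{2/3}}\cdot\min\big\{(Bv_0/p_0)^{2/3},(mT)^{2/3}\big\}\right).
\]

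The step I expect to be the main obstacle is the rigorous version of (i): converting the pointwise, per-platform ratio loss of Lemma~\ref{lemma:discretization} into the aggregate regret term $B\epsilon v_0/p_0^2$. One must verify that the primal-dual guarantee of Claim~\ref{claim:lbrewucb} degrades \emph{linearly} and \emph{gracefully} when every arm's bang-per-buck ratio is shifted down by a common additive amount, and in particular that the resulting loss scales with the consumed cost (hence with $B$) rather than with the number of rounds $T$; otherwise the exponents in the final bound would change. A secondary point to check is that, after substituting the optimal $\epsilon$, the $\sqrt{m\,OPT/\epsilon}$ term indeed stays dominated, and that the two-sided bound on $OPT_{LP}$ correctly selects the budget-binding branch when $Bv_0/p_0\le mT$ and the horizon-binding branch otherwise.
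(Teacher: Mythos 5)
Your proposal is correct and follows essentially the same route as the paper: substitute $n=1/\epsilon$ into Theorem~\ref{thm:maindiscrete}, add the discretization loss $B\epsilon v_0/p_0^2$ from Lemma~\ref{lemma:discretization}, bound $OPT_{LP}\le\min\{Bv_0/p_0,\,mT\}$, and optimize $\epsilon$ (the paper does this via an explicit two-case choice of $\epsilon$, which coincides with your balanced choice after the upper bound on $OPT_{LP}$ is substituted). The aggregation step you flag as the main obstacle is also the one the paper asserts in a single sentence rather than proving in detail, so you are not missing anything the paper supplies.
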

\begin{proof}
By substituting $n=1/\epsilon$ in the regret bound of the previous section, and by adding the additional loss in revenue given to the discretization, 
the bound on the regret is:

\[
	\tilde{O}\left(OPT_{LP} \sqrt {\frac{m}{B \epsilon} }+ \sqrt{\frac{m OPT}{\epsilon}} + \frac{B\epsilon v_0 }{p_0^2} \right).
\]

Note that $OPT\leq OPT_{LP} \leq \min\left\{ \frac{B v_0}{p_0}, mT \right\}$. We have two cases.

When $\min\left\{ \frac{B v_0}{p_0}, mT \right\} = \frac{B v_0}{p_0}$ the regret is bounded by:

\[
	\tilde{O}\left( \frac{v_0}{p_0}\sqrt {\frac{B m}{\epsilon} }+ \sqrt{\frac{m B v_0}{p_0 \epsilon}} + \frac{B\epsilon v_0}{p_0^2} \right).
\]

Substituting $\epsilon = \frac{p_0^{2/3} \cdot m^{1/3}}{B^{1/3}}$ we obtain that the regret is upper-bounded by $\tilde{O}\left(\frac{m^{1/3} B^{2/3} v_0 }{p_0^{4/3}}\right)$. 

Likewise, when $\min\left\{ \frac{B}{p_0}, mT \right\} = mT$, given that $mT<B$ if the problem is budget constrained, by setting  $\epsilon = \frac{m p_0^{4/3} T^{2/3}}{B v_0^{2/3}}$  we get the regret to be upper-bounded by $\tilde{O}\left(\frac{m T^{2/3} v_0^{1/3}}{p_0^{2/3}}\right)$ thus proving the theorem. 	
\end{proof}

%

\section{Lower Bounds}
\label{sec:lowerbound}

In this section, we show that the algorithms for discrete and continuous bid spaces are near-optimal. We start with a lower bound for discrete bid space as a function of $m$ and $OPT$. 
	\begin{theorem}
		\label{thm:LB2}
		For discrete bid-spaces, there exists an instance $\mathcal{I}$ such that any algorithm will incur a regret of at least $\Omega\left( \sqrt{m OPT} \right)$.
	\end{theorem}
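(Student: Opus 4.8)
The plan is to reduce the multi-platform problem to $m$ independent single-platform instances and to combine a per-platform information-theoretic lower bound with an additivity (decomposition) argument across platforms. Concretely, I would take the instance $\mathcal{I}$ to consist of $m$ symmetric and \emph{independent} platforms, where each platform carries a two-bid gadget whose \emph{optimal} bid is determined by a hidden bit $\theta_i \in \{0,1\}$ drawn uniformly and independently across $i \in [m]$. The two bids are chosen so that their bang-per-buck ratios $\bar r/\bar c$ differ by $\Theta(\epsilon)$, the ordering of the two ratios is flipped by $\theta_i$, and the two induced feedback distributions (win/loss, realized price and value) are at total-variation distance $\Theta(\epsilon)$; such a gadget is realizable within the $[0,1]$ bidding model by placing the critical-bid mass at two points and letting $\theta_i$ perturb that mass. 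Costs are kept \emph{deterministic} given the winning bid, so that budget depletion per round is a deterministic function of the played tuple; this removes the randomness of the stopping time $\tau$ and lets me control the effective number of rounds.

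Next, I would establish the per-platform bound. By symmetry the optimal policy spends an equal share of the budget on each platform, always at the better ratio, so $OPT = \Theta(m\, OPT_1)$ where $OPT_1$ is the per-platform optimum, and each platform is engaged for $\tau = \Theta(OPT/m)$ informative rounds. On a single platform the interaction is a two-armed stochastic bandit with reward gap $\Theta(\epsilon)$ over $\tau$ pulls; by a standard two-point / change-of-measure argument (Le Cam, or KL together with Pinsker / Bretagnolle--Huber), any policy plays the suboptimal bid a constant fraction of the time unless $\tau \epsilon^2 = \Omega(1)$, so the expected per-platform regret is $\Omega(\min\{\tau \epsilon,\ \sqrt{\tau}\})$. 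Choosing the gap $\epsilon = \Theta(1/\sqrt{\tau}) = \Theta(\sqrt{m/OPT})$ in the construction makes this $\Omega(\sqrt{\tau}) = \Omega(\sqrt{OPT/m})$ per platform.

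The crucial step is to add these per-platform regrets. Here I would invoke independence: the feedback observed on platform $j \neq i$ is independent of $\theta_i$, so the posterior on $\theta_i$ given the \emph{entire} interaction coincides with the posterior given only platform $i$'s feedback, and the single-platform lower bound applies to each platform even against a fully adaptive algorithm that couples its bids across platforms. Since rewards and costs are additive across platforms and every platform is engaged in every one of the $\tau$ rounds, the total expected regret is at least the sum of the $m$ per-platform regrets, namely $m \cdot \Omega(\sqrt{OPT/m}) = \Omega(\sqrt{m\, OPT})$. Averaging over the uniform choice of $(\theta_1,\dots,\theta_m)$ shows that for every algorithm some instance in the family attains this regret, which yields the claimed $\Omega(\sqrt{m\, OPT})$ bound.

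I expect the main obstacle to be the decoupling under the \emph{shared} budget and random stopping time: a priori an algorithm could try to trade budget across platforms (spending more where it has learned more) so that the per-platform interactions are neither independent nor of a fixed length. The deterministic-cost, fully symmetric construction is what neutralizes this—it fixes the number of rounds and makes each platform's budget share and sample count identical and non-adaptive—so that the clean per-platform change-of-measure bound goes through and the regrets genuinely sum. A secondary technical point is exhibiting the explicit two-point bidding gadget and tuning its parameters ($\epsilon$, the critical-bid support, and $B$) so that the realized $OPT$ has the intended order $\Theta(m\tau)$ while keeping all bids, prices and values in $[0,1]$.
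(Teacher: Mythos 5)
Your overall strategy---$m$ independent two-armed gadgets, a per-platform change-of-measure bound of $\Omega(\sqrt{\tau})$ with $\tau=\Theta(OPT/m)$, and additivity of the per-platform regrets via independence of the hidden bits---is a legitimate route and is genuinely different from the paper's. The paper instead adapts the classical needle-in-a-haystack lower bound with the \emph{platforms} playing the role of arms: one bid per platform, all with fixed cost $1/2$ and expected value $1/2$ except a single hidden platform $j$ with value $\tfrac12(1+\epsilon)$; identifying $j$ costs $\Theta(m/\epsilon^2)$ units of budget, and setting $\epsilon=\sqrt{m/B}$ yields regret $\Omega(m/\epsilon)=\Omega(\sqrt{mB})=\Omega(\sqrt{m\,OPT})$. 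Your decomposition is essentially the semi-bandit argument the paper itself invokes (via \cite{kveton2015tight}) for the \emph{continuous} lower bound of Theorem~\ref{thm:LB1}, so the independence/additivity step is standard and sound.

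The gap is in the gadget. In this model the realized value $v_t(i)\sim V(i)$ does not depend on the bid; the bid only controls the indicator $\mathbb{I}[b\ge p]$. Hence for two bids $b_1<b_2$ on the same platform, $\bar r_{b_2}-\bar r_{b_1}=\mathbb{E}[v]\Pr[b_1<p\le b_2]$ while $\bar c_{b_2}-\bar c_{b_1}=\mathbb{E}\left[p\,\mathbb{I}(b_1<p\le b_2)\right]$: any $\Theta(\epsilon)$ separation in reward (or in bang-per-buck) forces a $\Theta(\epsilon)$ separation in expected cost, and the hidden bit $\theta_i$ must be encoded in the critical-bid distribution, so the costs themselves depend on $\theta_i$ and on which bid is played. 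Your claim that deterministic costs fix the horizon $\tau$ and decouple the platforms therefore does not hold: budget consumption drifts by $\Theta(\epsilon)$ per platform per round depending on the strategy, i.e.\ by $\Theta(\epsilon m\tau)=\Theta(\sqrt{m\,OPT})$ in total---exactly the order of the bound you are proving---so the budget coupling that you correctly flag as the main obstacle is not neutralized and cannot be absorbed as a lower-order term. The argument can likely be repaired (for instance by accounting regret per unit of budget at the bang-per-buck level, or by following the paper and hiding the information in a single distinguished platform so that only one bid per platform is needed and all arms have identical cost), but as written the required instance does not exist in this model.
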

	\begin{proof}[Proof Sketch.] The lower bound follows by adapting the classical lower bound for stochastic bandits \cite{Auer2002}. We consider one arm for each platform and a time horizon $T= 2 B.$
 Each platform $i\in [m]$ different from $j$  has expected reward $r_i=1/2$ and fixed cost $c_i=1/2$, while platform $j$ has expected reward $r_j = 1/2 (1+\epsilon)$ with fixed cost $c_j=1/2$.  
 Each platform  needs to be executed  $1/\epsilon^2$  times in order to find out the best arm.  The total budget needed in order to find the best arm is therefore equal to $B$. 
The  regret for all arms is $\Omega (m / \epsilon)$.  By setting $\epsilon = \sqrt{m/B}$ we obtain the lower bound since  the optimal stochastic policy will play arm $j$ for all the $2B$ rounds 
with cost $B$ and optimal reward   $OPT = (1+\epsilon) B$. 
	\end{proof}
\begin{theorem}
		\label{thm:LB1}
		For continuous bid spaces, there exists an instance $\mathcal{I}$ such that any algorithm will incur a regret of at least $\Omega\left( m^{1/3} B^{2/3} \right)$. 
	\end{theorem}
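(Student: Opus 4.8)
The plan is to mirror the discrete lower bound of Theorem \ref{thm:LB2}, but to replace each platform's single near-optimal arm with a continuum of bids containing a hidden ``needle,'' so that identifying the best bid on a platform requires resolving an entire $\epsilon$-grid rather than a single gap. Concretely, I would construct an instance with $m$ independent platforms, fix a granularity $\epsilon$ to be chosen later, and on each platform $i$ conceptually partition the feasible bids into $K = 1/\epsilon$ cells. I would design the critical-bid and value distributions so that the bang-per-buck ratio $r_b(i)/c_b(i)$ is flat at some common baseline for all cells except one uniformly random cell centered at a hidden bid $b_i^\ast$, where it exceeds the baseline by $\Delta = \Theta(\epsilon)$. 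Lemma \ref{lemma:discretization} is what makes $\Delta=\Theta(\epsilon)$ the right (and maximal) scale: since the bang-per-buck can change by at most $O(\epsilon v_0/p_0^2)$ across one grid step, a width-$\epsilon$ bump can have height only $\Theta(\epsilon)$, so the gap the learner must detect is forced to be small. I would also normalize so that every winning play costs $\Theta(p_0)=\Theta(1)$ budget, making the number of plays proportional to the budget consumed.

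Next I would pin down the optimum and force the learner to engage every platform. Working in the budget-binding regime of Theorem \ref{thm:fullRegret}, and scaling $T$ and the per-round budget drain $B/T$ so that a single platform cannot absorb the whole budget within $T$ rounds, the optimal policy must spread spending across all $m$ platforms and play the bump bid $b_i^\ast$ on each, achieving bang-per-buck exceeding the baseline by $\Delta$ over the full budget $B$. Crucially, a learner that abandons a platform forfeits that platform's entire baseline reward (order $T$), which dominates any bump advantage; hence any near-optimal learner must bid nontrivially on all $m$ platforms in essentially every round, and therefore faces $m$ parallel continuous needle-search problems, each over $N = \Theta(B/m)$ effective plays and sharing only the global budget.

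I would then lower bound each platform's regret by the classical information-theoretic argument behind \cite{Auer2002}, exactly as in Theorem \ref{thm:LB2}: distinguishing the bump cell among $K=1/\epsilon$ candidates with gap $\Delta=\Theta(\epsilon)$ requires $\Omega(K/\Delta^2)=\Omega(1/\epsilon^3)$ plays on that platform, and until the bump is localized the learner suffers bang-per-buck deficit $\Theta(\epsilon)$ on that platform's spend. A KL/Pinsker comparison of the bid histories under the random location of $b_i^\ast$ versus the flat reference instance shows that, whatever budget $B_i$ the learner allocates to platform $i$, its contribution to the regret is $\Omega(\min\{\Delta B_i,\,K/\Delta\}) = \Omega(\min\{\epsilon B_i,\,1/\epsilon^2\})$. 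Summing over the platforms under the constraint $\sum_i B_i \le B$, and using that the forced use of all platforms prevents the learner from dodging the under-explored ones, gives total regret $\Omega\bigl(\sum_{i=1}^m \min\{\epsilon B_i,\,1/\epsilon^2\}\bigr)$.

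Finally I would optimize the granularity. Balancing the two terms by setting $\epsilon B_i = 1/\epsilon^2$ at the even allocation $B_i = B/m$ yields $\epsilon = \Theta\bigl((m/B)^{1/3}\bigr)$, at which each platform contributes $\Theta(1/\epsilon^2) = \Theta((B/m)^{2/3})$ and the total is $m \cdot (B/m)^{2/3} = \Theta(m^{1/3} B^{2/3})$, matching the claim; equivalently, full exploration of all platforms costs $\Theta(m/\epsilon^3) = \Theta(B)$ budget, saturating the constraint. I expect the main obstacle to be the second and third steps together: exhibiting explicit continuous price/value distributions that realize the ``flat bang-per-buck with a single $\Theta(\epsilon)$ bump'' profile while keeping per-play cost $\Theta(1)$, and then rigorously ruling out adaptive cross-platform budget reallocation. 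The clean fix for the latter is to randomize the bump locations independently across platforms and average the regret, so that the per-platform bound $\Omega(\min\{\epsilon B_i,\,1/\epsilon^2\})$ holds in expectation for every allocation, and the concavity of $\min\{\epsilon B_i,\,1/\epsilon^2\}$ together with the forced participation on all platforms prevents the learner from concentrating its way out of the bound.
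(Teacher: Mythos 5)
Your construction is, at its core, the same one the paper uses: a needle-in-the-haystack over the continuum on each platform, with a bump of width and height $\Theta(\epsilon)$, balanced at $\epsilon = \Theta((m/B)^{1/3})$ to trade the per-round gap against the $1/\epsilon^3$ plays needed to localize the needle. Where you diverge is in the aggregation over platforms. The paper plants the \emph{same} needle location on all $m$ platforms, discretizes to a $K$-armed semi-bandit, and invokes the observation of Kveton et al.\ that $m$ copies of a $K$-armed bandit are equivalent to a single $K$-armed bandit with rewards scaled by $m$, which yields $\Omega(\epsilon T)$ regret for $\epsilon \lesssim \sqrt{mK/T}$ and hence $\Omega(m^{1/3}T^{2/3})$ directly; the budget plays no real role in their instance. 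You instead randomize the needle locations independently across platforms and run a per-platform budget-allocation argument with the bound $\Omega(\min\{\epsilon B_i, 1/\epsilon^2\})$ summed under $\sum_i B_i \le B$. Your route is more faithful to the budgeted formulation, but it is also where the one genuine gap sits.

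The shaky step is the forced-participation claim. In a budget-binding instance where all platforms share the same baseline bang-per-buck, a learner that abandons a platform does \emph{not} forfeit ``order $T$'' reward: baseline reward is fungible, since spending the same budget at the same bang-per-buck elsewhere recovers it. Since $\min\{\epsilon B_i, 1/\epsilon^2\}$ is concave and vanishes at $B_i=0$, without forced participation the learner minimizes your sum by concentrating all of $B$ on one platform, collapsing the bound to $\Omega(1/\epsilon^2)$ with no $m$-dependence. The repair is the one you gesture at but should make explicit: take per-win cost $\Theta(1)$ and $B = \Theta(mT)$, so that each platform can absorb only $\Theta(T)$ budget over the horizon and every platform is \emph{forced} to receive $B_i = \Theta(B/m) = \Theta(T)$. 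In that regime the problem decomposes into $m$ essentially independent continuum bandits of horizon $T$ each, the sum evaluates to $m\cdot T^{2/3} = m^{1/3}B^{2/3}$, and the delicate cross-platform reallocation argument becomes unnecessary --- which is, implicitly, also why the paper's reduction to $m$ parallel copies is legitimate. One further small point: Lemma~\ref{lemma:discretization} is an upper bound on the bang-per-buck increment for continuous densities, so it is a consistency check that $\Delta = \Theta(\epsilon)$ is achievable, not a derivation of it; you still owe explicit price/value densities realizing the flat-plus-bump profile, as you correctly note.
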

	\begin{proof}[Proof Sketch.]
		We start with the simple case of $m=1$. The proof of this theorem is derived by using the lower-bound construction for for Lipshitz bandits (see Chapter 4 in \cite{MAL-068}). In particular, we consider the simplest setting of $1$-platform with no restrictions on the budget constraint (\emph{i.e.,} $B = T$). The objective function of the algorithm thus, is to maximize the function $f(p)$, where $f$ is the continuous density function of the cumulative distribution of the critical bids $P$. Since, this is continuous this also implies that this function is $1$-Lipshitz and thus, the setting reduces to that of Lipshitz bandits. From Theorem 4.2 in \cite{MAL-068} we have that any algorithm incurs a regret of at-least $\Omega(T^{2/3})$. 
		
		Consider the case of $m$ platforms. The proof uses a similar strategy as in the lower bound for Lipshitz bandits (see Chapter 4 in \cite{MAL-068}) combined with the lower-bound strategy used for semi-bandits (\emph{e.g.,} section 6 in \cite{kveton2015tight}). We will closely follow the notations used in \cite{MAL-068}. Define the instance $\mathcal{I}(x^*, \epsilon)$ by the following. $\mu(x, i)$ denotes the mean reward for arm $x \in [0, 1]$ for platform $i \in [m]$. In a given instance all the platforms have the same mean reward function.
			\[
				\mu(x, i) = 				
				\begin{cases}
					\frac{1}{2} & \text{all arms x such that $|x - x^*| \geq \epsilon$}\\
					\frac{1}{2} + \epsilon - |x - x^*| & \text{otherwise}
				\end{cases}
			\]
			Similar to \cite{MAL-068}, we will now construct instances $\mathcal{J}(a^*, \epsilon)$ which is a semi-bandit problem on finite number of arms. More precisely, fix $K \in \mathbb{N}$ to be fixed in the analysis. We define a semi-bandit problem on $K*m$ atoms indexed as $(k_i, m_j)$ for $i \in [K]$ and $j \in [m]$. At each time-step the algorithm can choose exactly one of the atoms from the subset of atoms $\{(k_i, m_j)\}_{i \in [K]}$ for each $j \in [m]$. Thus, at each time-step the algorithm chooses atmost $m$ atoms. For any fixed $k_i$ for $i \in [K]$, the reward assigned for the atoms $\{(k_i, m_j)\}_{j \in [m]}$ is the same and set as in the lower-bound proof for Lipshitz bandits~\cite{MAL-068}. 
			
			We now use the observation made in \cite{kveton2015tight}; an instance with $m$ copies of a $K$-armed bandit problem (\emph{i.e.,} each arm is copied $m$ times) is equivalent to a single $K$-armed bandit problem where the reward is scaled by a factor $m$. Thus, the instance $\mathcal{J}(a^*, \epsilon)$ can be replaced by another instance $\mathcal{J'}(a^*, \epsilon)$ on $K$ arms, such that the reward is scaled by a factor $m$. This leads us to the following equivalent version of Theorem 4.2 in \cite{MAL-068}.
			
			\begin{theorem}[Theorem 4.2 from \cite{MAL-068}]
				For a stochastic multi-armed bandit problem with $K$ arms and time-horizon, with rewards in the range $[0, m]$. Let $ALG$ be any algorithm for this problem. Pick a positive constant $\epsilon \leq \sqrt{\frac{c m K}{T}}$ where $c$ is some absolute constant. Then there exists an instant $\mathcal{J'} = \mathcal{J'}(a^*, \epsilon)$ for $a^* \in [K]$ such that 
				\[
						\mathbb{E}[R(T)~|~\mathcal{J'}] \geq \Omega(\epsilon T).
				\]	
				
			\end{theorem}
			
			Thus, choosing $\epsilon = \frac{1}{2K}$ and $K = \left( \frac{T}{c m} \right)^{1/3}$ and proceeding as in \cite{MAL-068} we obtain a regret lower-bound of $\Omega(m^{1/3} T^{2/3})$ as claimed in the theorem.
	\end{proof}

\section{Experiments}
\label{sec:experiments}
 	\paragraph{Setup.}
		We construct a real-world dataset obtained from the logs of a large internet advertising company. In this dataset, we have $10$ platforms and $3$ budget constrained advertisers. For each advertiser, we normalize the dataset and obtain the relevant distributions for the critical bid $P$ and the valuation $V$. Our final dataset is obtained by sampling the critical bids from $P$ and the valuations from $V$ at each time-step. Due to the nature of the bidding system, we have a good approximation to the \emph{true valuations} for each advertiser on each platform. Outside of the larger advertisers, most advertisers have much smaller budgets compared to their total potential audience and thus, the most interesting regime is when $B \ll T$.
 
 	We run the various algorithms for $T=10^5$ time-steps with a pre-determined discretization of the bid space. Due to the nature of the price distributions for the various platforms (see Figure~\ref{fig:representativeDistributions} for representative distributions), we choose  the $\epsilon$-hyperbolic mesh as the discretization, where the bid is of the form $\frac{1}{1 + \epsilon \cdot \ell}$ where $\ell \in \mathbb{N}$. This is similar to the discretization used for dynamic procurement in \cite{Badanidiyuru:2018}. We vary the budget $B$ and compare the total reward obtained by the different algorithms against the optimal $LP$ value computed on the mean valuation and prices. For each setting of the parameters, we run $5$ independent runs and compute the average as the reward. Additionally, we also measure the run-time of our algorithm and compare that against the baseline LuekerLearn algorithm.
 	
 	\paragraph{Algorithms.}
 		We compare our algorithm against three different baselines. The first baseline is the SemiBwK-RRS algorithm proposed in \cite{Karthik-aistats18}. As mentioned in the introduction, on the theoretical front, our algorithm improves over this algorithm in worst-case scenario when $OPT \ll T$ and when $B \ll \sqrt{T}$. Thus, we expect to improve over this algorithm in this regime, while having similar performance in the large budget regime. The second baseline is the naive $UCB$ algorithm that ignores the budget constraints and maximizes the rewards. The third algorithm is the \emph{The LuekerLearn Algorithm} from \cite{tran-tranh14} adapted to multiple resources and valuation function in the objective. In particular, we run $m$ different copies of this algorithm, one for each platform. At each time-step, we divide the total remaining budget uniformly across the $m$ different platforms. In other words, if the remaining budget at time $t$ is $B_t$, each instance of the LuekerLearn algorithm will receive the remaining budget as $B_t/m$. Thus, after each time-step, each instance of the algorithm has a \emph{synchronization} where unused budget from one platform can be \emph{transferred} to the other platforms. The full algorithm is described in the Appendix. We would like to emphasize that the baselines perform almost as good as the algorithm proposed in this paper when $m$ is small (see Fig.~\ref{fig:varyPlatform}), thus, suggesting that the modified LuekerLearn algorithm is a non-trivial baseline.
 
 \paragraph{Results.} 
 	Fig.~\ref{fig:varyBudget} shows the variation of the total reward collected by each of the three algorithms as a function of budget $B$ for a given advertiser with $T=10^5$ steps. We see that when the total budget becomes larger and a constant fraction of $T$, all algorithms perform nearly well. When the budget becomes smaller, both the $UCB$ algorithm and the modified LuekerLearn algorithm collects lower total reward since they run out of budget very early. Moreover, we see that our algorithm performs better than SemiBwK-RRS in this regime and matches the theory. When the budget increases, both our algorithm and SemiBwK-RRS have comparable performance. In Fig.~\ref{fig:varyPlatform}, we study the effect of the number of platforms on the total reward. We randomly choose a subset $m \subseteq [10]$ and run the three algorithms (with $B=10^3$ and $T=10^5$). We see that with fewer platforms, the difference between the modified LuekerLearn algorithm and our algorithm vanishes. Addtionally, we also see that our algorithm performs slightly better than SemiBwK-RRS when the number of platforms are large. This is unexplained by theory but seems to hold empirically. In Fig.~\ref{fig:stoppingTime}, we look at the average stopping time of the various algorithms. We can see that our algorithm (also SemiBwK-RRS which we omit for clarity) depletes the budget uniformly and runs till the end (\emph{i.e.,} $T=10^5$) while both the baselines deplete their budget within a small fraction of the total time-steps. In advertising platforms, advertisers expect the budget to be used up uniformly over a large period (\emph{e.g.,} a day). Thus, algorithms that deplete the budget very quickly are not desirable, even if they end up collecting larger reward. This is another feature of our algorithm that makes it useful for practice.

\section{Conclusion}
	
In this paper we presented algorithms for budget management in online advertising across $m$ bidding platforms.  We modeled the problem as  Stochastic Bandits with Knapsack with an $m$-dimensional bidding vector. We designed an algorithm for bidding in discrete and continuous actions spaces and proved a mathematical bound on it regret. We also sketched a lower-bound to show that this is optimal. Finally, we used real-world datasets to show extensive empirical evaluation and compared against other competitive heuristics. From a practical stand-point, we believe our work can open directions in modeling for bidding in auctions. First, it is a challenging task to extend to settings where the auction results have correlations across different time-steps. Second, we expect the study of pacing strategies for multi platform advertisement to provide challenging problems for practical and theoretical investigations even beyond the multi-armed bandit setting. 

\onecolumn
 \begin{figure}
\centering
\begin{subfigure}{.3\textwidth}
  \centering
  \includegraphics[width=\linewidth]{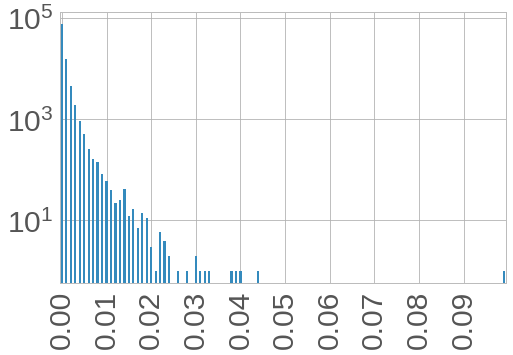}
\end{subfigure}%
\begin{subfigure}{.3\textwidth}
  \centering
  \includegraphics[width=\linewidth]{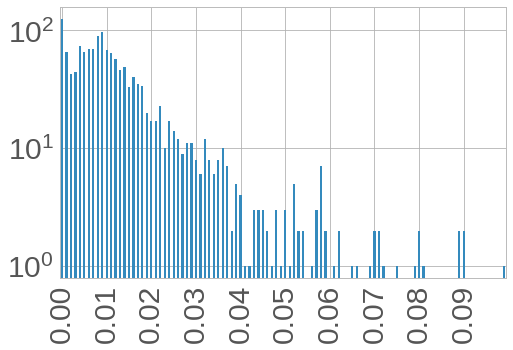}
\end{subfigure}
\begin{subfigure}{.3\textwidth}
  \centering
  \includegraphics[width=\linewidth]{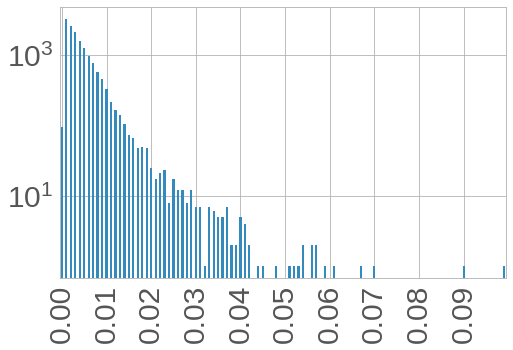}
\end{subfigure}
\caption{Representative price distribution for three platforms and one advertiser}
\label{fig:representativeDistributions}
\end{figure}

\begin{figure}
\centering
\begin{subfigure}{.3\textwidth}
  \centering
  \includegraphics[width=\linewidth]{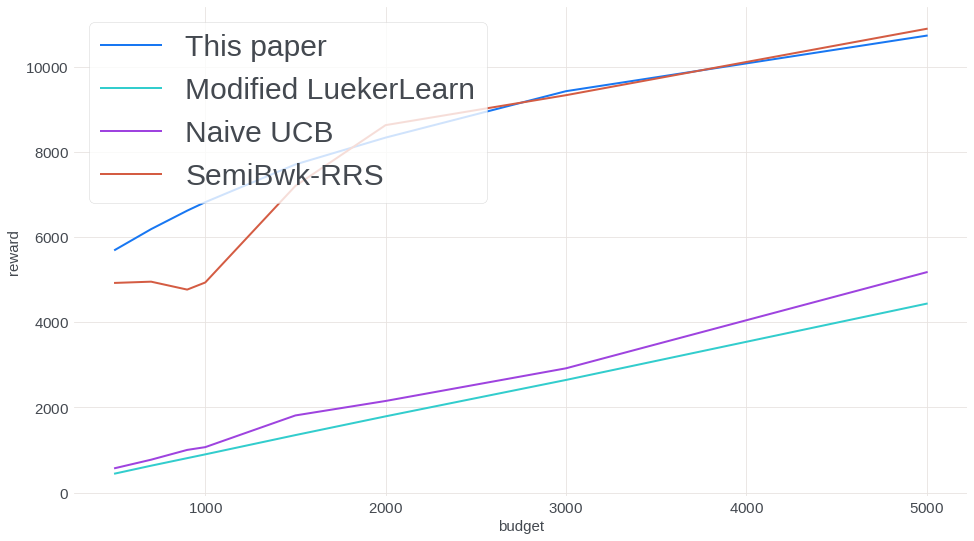}
\end{subfigure}%
\begin{subfigure}{.3\textwidth}
  \centering
  \includegraphics[width=\linewidth]{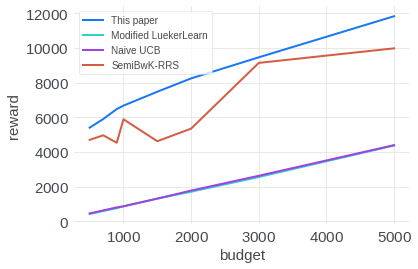}
\end{subfigure}
\begin{subfigure}{.3\textwidth}
  \centering
  \includegraphics[width=\linewidth]{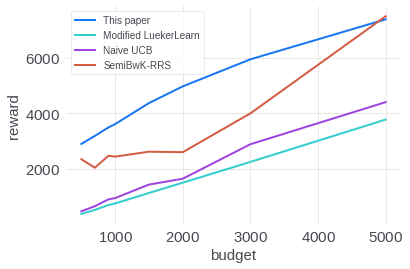}
\end{subfigure}

\caption{Total reward obtained as a function of budget}
\label{fig:varyBudget}
\end{figure}

\begin{figure}[!h]
\centering
\begin{subfigure}{.3\textwidth}
  \centering
  \includegraphics[width=\linewidth]{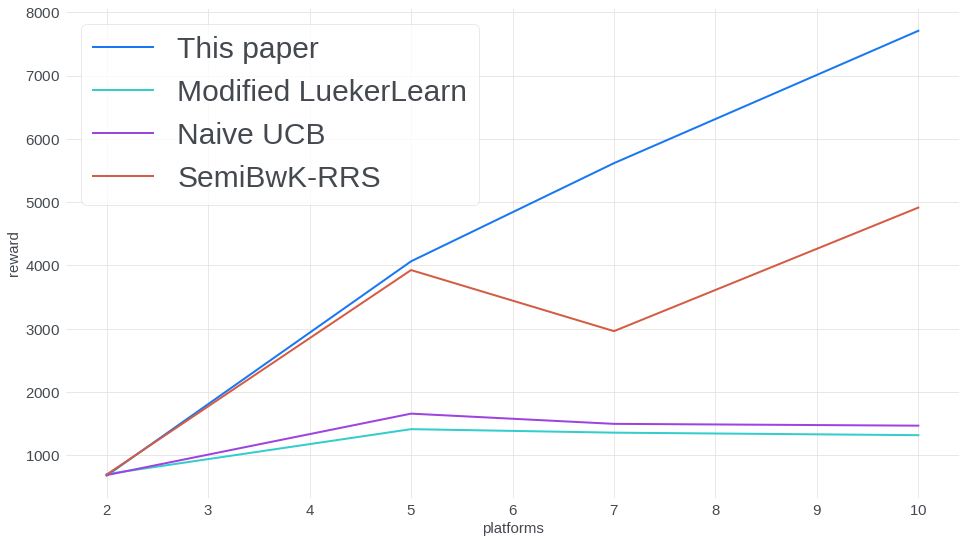}
\end{subfigure}%
\begin{subfigure}{.3\textwidth}
  \centering
  \includegraphics[width=\linewidth]{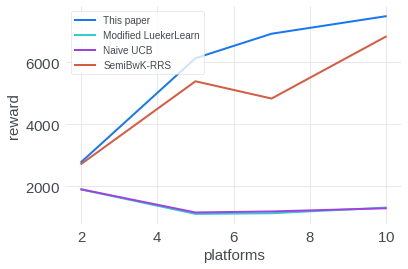}
\end{subfigure}
\begin{subfigure}{.3\textwidth}
  \centering
  \includegraphics[width=\linewidth]{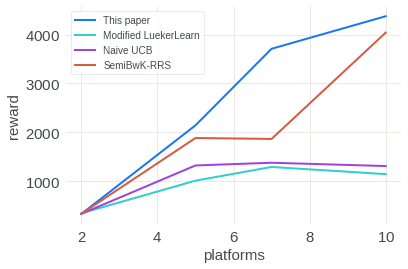}
\end{subfigure}

\caption{Total reward obtained as a function of number of platforms}
\label{fig:varyPlatform}
\end{figure}

\begin{figure}[!h]
\centering
\begin{subfigure}{.3\textwidth}
  \centering
  \includegraphics[width=\linewidth]{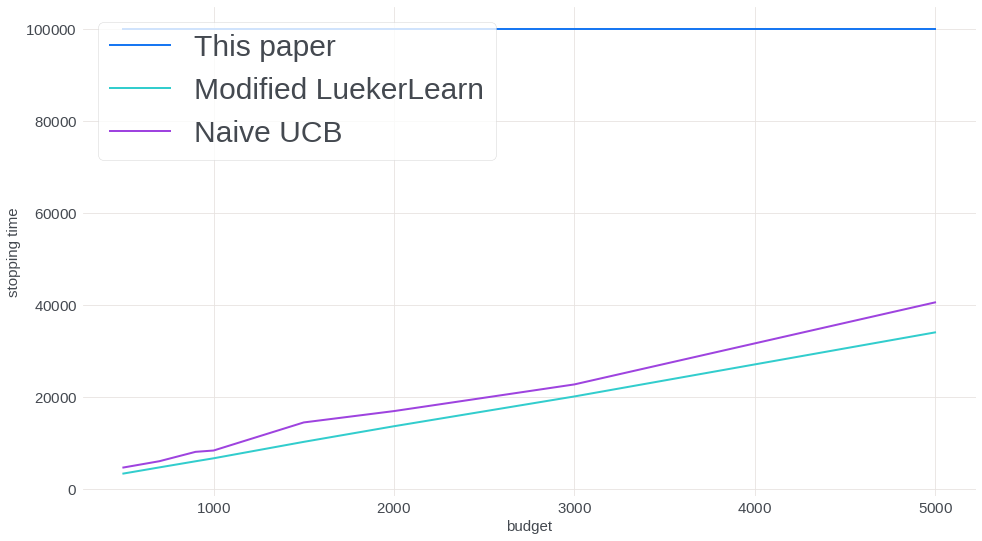}
\end{subfigure}%
\begin{subfigure}{.3\textwidth}
  \centering
  \includegraphics[width=\linewidth]{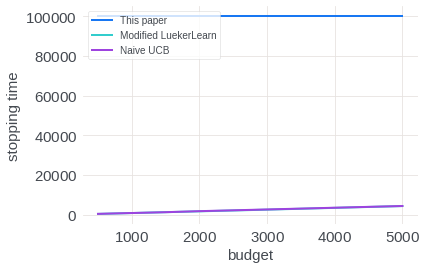}
\end{subfigure}
\begin{subfigure}{.3\textwidth}
  \centering
  \includegraphics[width=\linewidth]{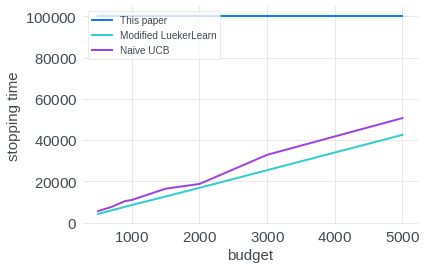}
\end{subfigure}

\caption{Average stopping time of the three algorithms as a function of budget}
\label{fig:stoppingTime}
\end{figure}

\bibliographystyle{plain}
\bibliography{references}

\appendix 

\section{Modified LuekerLearn algorithm}

In this section, we formally describe the modified LuekerLearn algorithm we employ as a baseline in the experimental section. The unbiased maximum likelihood estimate for censored data is given by the Zeng's estimator in Steps 1-5 of \cite{zeng2004estimating}, which was first used in \cite{tran-tranh14}. A simpler estimate, which is not unbiased, is the famous Kaplan-Meier estimator \cite{kaplan1958nonparametric} which was used in \cite{amin2012budget}. After implementing both, we found that the performance was similar and thus, throughout the experimental section we use the simpler Kaplan-Meier estimator. For our purposes, the Kaplan-Meier estimator is defined as follows. Let $D(t', b, p)$ denote the number of time-steps until $t'$ such that bidding $b$ on platform $p$ did not result in a click (\emph{i.e.,} $b < p$) and let $N(t', b, p)$ denote the number of times until $t'$ we bid $b$ on platform $p$. Then, the estimate $\hat{p}_t(b, p)$ is defined as 
	\[
		1-\prod_{t'=1}^{t-1} \left( 1 - \frac{D(t', b, p)}{N(t', b, p)} \right).
	\]

\begin{algorithm}[h]
\caption{Modified LuekerLearn algorithm}
\begin{algorithmic}
	
	\STATE Initialize $B_1 = B$, $\hat{p}_1(i, b) = 1$ for all platforms $i \in [m]$ and all bids $b \in [0, 1]$.
	\FOR{$t=1, \ldots, \tau $ (i.e., until resource budget is exhausted)}
		\STATE{ For each platform $p \in [m]$ play the bid that maximizes the following 
		\begin{eqnarray} 
		j_p^* = \argmax_{b \in [0, 1]} &  b \\
		{\rm s.t.} ~ & \sum_{ 0 \leq \sigma \leq b} \hat{p}_t(p, \sigma) \sigma \leq \frac{B_t}{m(T-t+1)}
		\end{eqnarray} }
		\STATE{Play arm $x_{\bm b_t}$ with bid $\bm b_t =  \{b_{p,j_p^*} \}_{p\in [m]}$}
		\STATE{Update $B_t$ to be the residual budget}
		\STATE{Update $\hat{p}$ for each bid and platform using the unbiased maximum likelihood estimate for censored data.}
		\ENDFOR	
	
\end{algorithmic}

\label{alg:modifiedLuekerLearn}

\end{algorithm}

\end{document}